\newtheorem{thm}{Theorem}[section]
\newtheorem{prop}[thm]{Proposition}
\theoremstyle{definition}
\newtheorem{defn}[thm]{Definition}
\definecolor{tensorcolor}{rgb}{0.65,0.77,0.95}
\definecolor{mpdocolor}{HTML}{FCDE70}
\definecolor{whampdocolor}{HTML}{FCDE70}
\definecolor{whampdocolorw}{HTML}{EEF7FF}
\definecolor{mpdotcolor}{rgb}{1,0.98,0.94}
\definecolor{btensorcolor}{rgb}{0.65,0.50,0.69}
\definecolor{whitetensorcolor}{HTML}{F8F8F8}
\definecolor{diamondcolor}{HTML}{E6F1ED}
\definecolor{unitarycolor}{rgb}{0.8,0.5,.5}
\definecolor{lcolor}{HTML}{D9EAFD}
\newcommand\doubledx{1.6}
\newcommand\whadx{0.988}
\newcommand\whady{0.812}
\newcommand\mthick{}
\newcommand{\op}[3]{
\begin{scope}[shift={(#1)}]
	\filldraw[\mthick, fill=white](0,0) circle[radius=#2];
	\draw (0,0) node {\scriptsize #3};
\end{scope}
}
\newcommand{\opsc}[5]{
\begin{scope}[shift={(#1)}]
	\draw[ \mthick, fill=#5, rounded corners=2pt] (-#2,-#3) rectangle (#2,#3);
	\draw (0,0) node {\scriptsize #4};
\end{scope}
}
\newcommand{\TCZ}[2]{
\begin{scope}[shift={(#1)}]
\def\disp{0.25};
        \def\rd{0.15};
        \def\rt{3.0};
        \def\rtt{5.5};
            \draw[Virtual,thick,rounded corners=2pt] (-\disp,\disp) -- (-\rt*\disp,\rt*\disp) -- (-\rtt*\disp,\rt*\disp);
            \draw[Virtual,thick,rounded corners=2pt] (\disp,\disp) -- (\rt*\disp,\rt*\disp) -- (\rtt*\disp,\rt*\disp);
            \draw[Virtual,thick,rounded corners=2pt] (-\disp,-\disp) -- (-\rt*\disp,-\rt*\disp) -- (-\rtt*\disp,-\rt*\disp);
            \draw[Virtual,thick,rounded corners=2pt] (\disp,-\disp) -- (\rt*\disp,-\rt*\disp) -- (\rtt*\disp,-\rt*\disp);
            \op{(-\disp,\disp)}{\rd}{\scriptsize $X$};
            \op{(\disp,\disp)}{\rd}{\scriptsize $X$};
            \op{(-\disp,-\disp)}{\rd}{\scriptsize $X$};
            \op{(\disp,-\disp)}{\rd}{\scriptsize $X$};
            \opsc{(2.3*\disp,0)}{0.12}{0.6}{}{lcolor};
            \opsc{(-2.3*\disp,0)}{0.12}{0.6}{}{lcolor};
            \ifnum#2=1
            \opsc{(0,2.3*\disp)}{0.6}{0.12}{}{lcolor};
            \opsc{(0,-2.3*\disp)}{0.6}{0.12}{}{lcolor};
            \fi
            \ifnum#2=2
            \opsc{(0,2.3*\disp)}{0.6}{0.12}{}{lcolor};
            \fi
            \ifnum#2=3
            \opsc{(0,-2.3*\disp)}{0.6}{0.12}{}{lcolor};
            \fi
\end{scope}
}
\newcommand{\GcTensor}[6]{
    \begin{scope}[shift={(#1)}]
    \ifnum#5=0
		\draw[Virtual] (-#2,0) -- (#2,0);
		\draw[] (0,#2) -- (0,-#2);
    \fi
    \ifnum#5=-1
		\draw[Virtual] (0,0) -- (#2,0);
		\draw[] (0,#2) -- (0,-#2);
    \fi
    \ifnum#5=1
		\draw[Virtual] (-#2,0) -- (0,0);
		\draw[] (0,#2) -- (0,-#2);
    \fi

    \ifnum#5=2
		\draw[Virtual] (-#2,0) -- (#2,0);
		\draw[] (0,0) -- (0,#2);
    \fi
    \ifnum#5=-2
		\draw[Virtual] (-#2,0) -- (#2,0);
		\draw[] (0,0) -- (0,-#2);
    \fi

    \ifnum#5=3
		\draw[Virtual] (-#2,0) -- (#2,0);
		\draw[] (0,-#2) -- (0,#2);
    \fi
    \ifnum#5=4
    \fi
        \draw[fill=#6, rounded corners=2pt,\mthick] (-#3,-#3) rectangle (#3,#3);
		\draw (0,0) node {\scriptsize #4};
	\end{scope}
}
\newcommand{\vTensor}[6]{
    \begin{scope}[shift={(#1)}]
        \draw[\mthick](-#4,0) -- (-#4,#5);
        \draw[\mthick](#4,0) -- (#4,#5);
        \draw[\mthick, fill=whitetensorcolor, rounded corners=2pt] (-#2,-#3) rectangle (#2, #3);
        \draw (0,0) node {\scriptsize #6};
        \draw[\mthick] (0,-#3)--(0,-#5);
    \end{scope}
}
\newcommand{\vtTensor}[6]{
    \begin{scope}[shift={(#1)}]
    \draw[\mthick](0,#5) -- (0,#3);
        \draw[\mthick](-#4,0) -- (-#4,-#5);
        \draw[\mthick](#4,0) -- (#4,-#5);
        \draw[\mthick, fill=whitetensorcolor, rounded corners=2pt] (-#2,-#3) rectangle (#2, #3);
        \draw (0,0) node {\scriptsize #6};
    \end{scope}
}
\newcommand{\xTensor}[5]{
    \begin{scope}[shift={(#1)}]
        \draw[Virtual] (-#2,0)--(-#4,0);
        \draw[\mthick, fill=whitetensorcolor, rounded corners=2pt] (-#2,-#3) rectangle (#2, #3);
        \draw (0,0) node {\scriptsize #5};
    \end{scope}
}
\newcommand{\xdTensor}[6]{
    \begin{scope}[shift={(#1)}]
        \draw[Virtual] (#2,#5)--(#4,#5);
        \draw[Virtual] (#2,-#5)--(#4,-#5);
        \draw[\mthick, fill=whitetensorcolor, rounded corners=2pt] (-#2,-#3) rectangle (#2, #3);
        \draw (0,0) node {\scriptsize #6};
    \end{scope}
}
\newcommand{\SideIdentityTensor}[4]{
	\begin{scope}[shift={(#1)}]
    \ifnum#4=-1
	   \draw [\mthick] (\doubledx-1,0.8) to  [bend right=90] (\doubledx-1,-0.8);
    \fi
    \ifnum#4=-2
	   \draw [\mthick] (\doubledx-1,0.8) to  [bend right=90] (\doubledx-1,-0.8);
      \draw [\mthick] (\doubledx-1,0.8) -- (\doubledx-0.5,0.8);
      \draw [\mthick] (\doubledx-1,-0.8) -- (\doubledx-0.5,-0.8);
    \fi
    \ifnum#4=-3
	   \draw [\mthick] (\doubledx-1,0.8) to  [bend right=90] (\doubledx-1,-0.8);
      \draw [\mthick] (\doubledx-1,0.8) -- (\doubledx-0.5,0.8);
      \draw [\mthick] (\doubledx-1,-0.8) -- (\doubledx-0.5,-0.8);
	\filldraw[color=black, fill=whitetensorcolor, \mthick] (\doubledx-1.4,0) circle (#3);
	\draw (\doubledx-1.4,0) node {#2};
    \fi
    \ifnum#4=1
	   \draw [\mthick] (-\doubledx+1,0.8) to  [bend left=90] (-\doubledx+1,-0.8);
    \fi
    \ifnum#4=2
	   \draw [\mthick] (-\doubledx+1,0.8) to  [bend left=90] (-\doubledx+1,-0.8);
      \draw [\mthick] (-\doubledx+1,0.8) -- (-\doubledx+0.5,0.8);
      \draw [\mthick] (-\doubledx+1,-0.8) -- (-\doubledx+0.5,-0.8);
    \fi
    \ifnum#4=3
	   \draw [\mthick] (-\doubledx+1,0.8) to  [bend left=90] (-\doubledx+1,-0.8);
      \draw [\mthick] (-\doubledx+1,0.8) -- (-\doubledx+0.5,0.8);
      \draw [\mthick] (-\doubledx+1,-0.8) -- (-\doubledx+0.5,-0.8);
	\filldraw[color=black, fill=whitetensorcolor, \mthick] (-\doubledx+1.4,0) circle (#3);
	\draw (-\doubledx+1.4,0) node {#2};
    \fi
\end{scope}
}
\newcommand{\whaMsymborg}[3]{
\begin{scope}[shift={(#1)}]
\filldraw[fill=#3] 
    (0,0) circle[radius=0.247];
    \node[anchor=center,scale=0.9] at (0,0) {#2};
\end{scope}
}
\newcommand{\whaMorg}[4]{
\begin{scope}[shift={(#1)}]
\def\rd{0.247};
\def\rlen{0.318};
\ifnum#2=1
    \draw[-mid] (0, \rd) -- (0, \rd+\rlen);
    \draw[-mid] (0, -\rd-\rlen) -- (0, -\rd);
    \draw[Virtual, -mid] (\rd+\rlen, 0) -- (\rd, 0);
    \draw[Virtual, -mid] (-\rd, 0) -- (-\rd-\rlen, 0);
\fi
\ifnum#2=2
    \draw[-mid] (0, \rd) -- (0, \rd+\rlen);
    \draw[-mid] (0, -\rd-\rlen) -- (0, -\rd);
    \draw[Virtual] (\rd+\rlen, 0) -- (\rd, 0);
    \draw[Virtual] (-\rd, 0) -- (-\rd-\rlen, 0);
\fi
\whaMsymborg{(0,0)}{#3}{#4};
\end{scope}
}
\newcommand{\whaM}[2]{
\whaMorg{#1}{#2}{$M$}{whampdocolor};
}
\newcommand{\whaATensor}[3]{
\begin{scope}[shift={(#1)}]
\def\rd{0.247};
\def\rlen{0.318};

\ifnum#2=1
\draw[] (0, \rd) -- (0, \rd+\rlen);
\draw[Virtual] (\rd+\rlen, 0) -- (\rd, 0);
\draw[Virtual] (-\rd, 0) -- (-\rd-\rlen, 0);
\fi
\ifnum#2=3
\draw[] (0, -\rd) -- (0, -\rd-\rlen);
\draw[Virtual] (\rd+\rlen, 0) -- (\rd, 0);
\draw[Virtual] (-\rd, 0) -- (-\rd-\rlen, 0);
\fi
\draw[fill=tensorcolor, rounded corners=2pt,\mthick] (-\rd,-\rd) rectangle (\rd,\rd);
\node[anchor=center] at (0,0) {#3};

\end{scope}
}
\newcommand{\mpo}[3]{
\begin{scope}[shift={(#1)}]
    \def\boxsizex{0.6}
    \def\boxsizey{0.5}
    \draw[Virtual] (2.223, 0.564) -- (1.729, 0.564);
    \draw[Virtual] (2.963, 0.564) -- (2.716, 0.564);
    \node[Virtual,anchor=center] at (3.212, 0.55) {$\cdots$};
    \draw[Virtual] (3.704, 0.564) -- (3.457, 0.564);
    \draw[Virtual] (4.198, 0.564) -- (4.691, 0.564);
    \draw[Virtual] (1.235, 0.564) arc[start angle=90, end angle=270, radius=0.2] -- (5.185, 0.164) arc[start angle=-90, end angle=90, radius=0.2];
    \draw[-mid] (1.482, 0.811) -- (1.482, 1.129);
    \draw[-mid] (2.469, 0.811) -- (2.469, 1.129);
    \draw[Virtual] (1.482, 0.564) -- (1.244, 0.564);
    \draw[-mid] (3.951, 0.811) -- (3.951, 1.129);
    \draw[Virtual] (4.198, 0.564) -- (3.951, 0.564);
    \draw[bevel, -mid] (1.482, 0) -- (1.482, 0.318);
    \draw[bevel, -mid] (2.469, 0) -- (2.469, 0.318);
    \draw[bevel, -mid] (3.951, 0) -- (3.951, 0.318);
    \whaMsymborg{(1.482, 0.564)}{\small #2}{white};
    \whaMsymborg{(2.469, 0.564)}{\small #2}{white};
    \whaMsymborg{(3.951, 0.564)}{\small #2}{white};
    \filldraw[ultra thin, fill=white] 
    (4.938-\boxsizex*0.5, 0.564-\boxsizey*0.5) rectangle ++(\boxsizex,\boxsizey);
    \node[anchor=center] at (4.938, 0.564) {\small #3};
\end{scope}
}
\newcommand{\mpotwo}[3]{
\begin{scope}[shift={(#1)}]
    \def\boxsizex{0.6}
    \def\boxsizey{0.5}
    \draw[Virtual] (2.964, 0.564) -- (1.729, 0.564);
    \draw[Virtual] (3.209, 0.564) -- (2.716, 0.564);
    \draw[Virtual] (1.235, 0.564) arc[start angle=90, end angle=270, radius=0.2] -- (3.211, 0.164) arc[start angle=-90, end angle=90, radius=0.2];
    \draw[-mid] (1.482, 0.811) -- (1.482, 1.129);
    \draw[-mid] (2.223, 0.811) -- (2.223, 1.129);
    \draw[Virtual] (1.482, 0.564) -- (1.244, 0.564);
    \draw[bevel, -mid] (1.482, 0) -- (1.482, 0.318);
    \draw[bevel, -mid] (2.223, 0) -- (2.223, 0.318);
    \whaMsymborg{(1.482, 0.564)}{\small #2}{white};
    \whaMsymborg{(2.223, 0.564)}{\small #2}{white};
    \filldraw[ultra thin, fill=white] 
    (2.964-\boxsizex*0.5, 0.564-\boxsizey*0.5) rectangle ++(\boxsizex,\boxsizey);
    \node[anchor=center] at (2.964, 0.564) {\small #3};
\end{scope}
}
\newcommand{\mpoone}[3]{
\begin{scope}[shift={(#1)}]
    \def\boxsizex{0.6}
    \def\boxsizey{0.5}
    \draw[Virtual] (3.209, 0.564) -- (2.716, 0.564);
    \draw[Virtual] (2.222, 0.564) arc[start angle=90, end angle=270, radius=0.2] -- (3.457, 0.164) arc[start angle=-90, end angle=90, radius=0.2];
    \draw[-mid] (2.469, 0.811) -- (2.469, 1.129);
    \draw[bevel, -mid] (2.469, 0) -- (2.469, 0.318);
    \whaMsymborg{(2.469, 0.564)}{\small #2}{white};
    \filldraw[ultra thin, fill=white] 
    (3.21-\boxsizex*0.5, 0.564-\boxsizey*0.5) rectangle ++(\boxsizex,\boxsizey);
    \node[anchor=center] at (3.21, 0.564) {\small #3};
\end{scope}
}
\tikzset{baseline={([yshift=-.5ex]current bounding box.center)}}
\tikzset{every path/.style={ line width=0.5pt, line cap=round }}
\colorlet{Virtual}{RedOrange}
\tikzstyle{bevel} = [ preaction = { draw, white, line width=3pt,  line cap = round } ]
\tikzstyle{bevel wide} = [ preaction = { draw, white, line width=4pt,  line cap = round } ]
\tikzstyle{symb} = [ draw=black, fill=black, line width=0.4pt, inner sep=1.5pt ]
\tikzstyle{mysymb} = [ draw=black, fill=white, circle, line width=0.3pt, inner sep=1pt, font=\small ] 
\tikzstyle{symb large} = [ inner sep=2.1pt ]
\tikzstyle{symb small} = [ inner sep=1pt   ]
\tikzstyle{symb tiny} = [ inner sep=0.8pt ]
\tikzstyle{symb fdisk} = [ circle ]
\tikzstyle{symb disk} = [ circle ]
\tikzstyle{symb square} = [ rectangle ]
\tikzstyle{symb fsquare} = [ rectangle ]
\tikzstyle{Msymb}=[draw=black, fill=whampdocolor, circle, inner sep=1pt, font=\small]
\tikzstyle{Nsymb}=[draw=black, fill=whampdocolorw, circle, inner sep=1pt, font=\small]
\tikzstyle{-mid} = [ decoration={ markings, mark = at position 0.50*\pgfdecoratedpathlength+0.6*3pt with \arrow{>[width=2pt]} }, postaction={decorate} ]
\tikzstyle{mid-} = [ decoration={ markings, mark = at position 0.50*\pgfdecoratedpathlength+0.6*3pt with \arrow{<[width=2pt]} }, postaction={decorate} ]
\newcommand\subsetsim{\mathrel{%
  \ooalign{\raise0.2ex\hbox{$\subset$}\cr\hidewidth\raise-0.8ex\hbox{\scalebox{0.9}{$\sim$}}\hidewidth\cr}}}
\newcommand{\ra}{\rightarrow}
\newcommand{\mS}{\mathcal{S}}
\newcommand{\mT}{\mathcal{T}}
\newcommand{\tr}{\mathrm{Tr}}
\newcommand{\bo}{\mathbbm{1}}
\newcommand{\rd}{\mathrm{d}}
\newcommand{\mA}{\mathcal{A}}
\newcommand{\dg}{\dagger}
\newcommand{\reg}{\text{reg}}
\newcommand{\basis}{\mathcal{B}}
\newcommand{\bdy}{\text{bdy}}
\newcommand{\CNOT}{\text{CNOT}}
\newcommand{\hbnt}{a}
\newcommand{\vbnt}{\mathbf{a}}
\newcommand{\Id}{\text{Id}}
\newcommand{\cu}{1_{\mA^*}}
\definecolor{XQ}{rgb}{1,0,0}
\definecolor{laur}{rgb}{0,0,1}
\definecolor{yuhan}{rgb}{0.9, 0, 0.5}
\begin{document}
\title{Trading Mathematical for Physical Simplicity:\\ Bialgebraic Structures in Matrix Product Operator Symmetries}

\author{Yuhan Liu}
\email{yuhan.liu@mpq.mpg.de}
\affiliation{Max Planck Institute of Quantum Optics, Hans-Kopfermann-Str. 1, Garching 85748, Germany}
\affiliation{Munich Center for Quantum Science and Technology (MCQST), Schellingstr. 4, 80799 M{\"{u}}nchen, Germany}
\author{Andras Molnar}
\affiliation{\mbox{University of Vienna, Faculty of Mathematics, Oskar-Morgenstern-Platz 1, 1090 Wien, Austria}}
\author{Xiao-Qi Sun}
\affiliation{Max Planck Institute of Quantum Optics, Hans-Kopfermann-Str. 1, Garching 85748, Germany}
\affiliation{Munich Center for Quantum Science and Technology (MCQST), Schellingstr. 4, 80799 M{\"{u}}nchen, Germany}
\author{\\Frank Verstraete}
\affiliation{Department of Applied Mathematics and Theoretical Physics, University of Cambridge,\\ Wilberforce Road, Cambridge, CB3 0WA, United Kingdom}
\affiliation{Department of Physics and Astronomy, Ghent University, Krijgslaan 281, 9000 Gent, Belgium}
\author{Kohtaro Kato}
\affiliation{Department of Mathematical Informatics, Graduate School of Informatics,\\ Nagoya University, Nagoya 464-0814, Japan}
\author{Laurens Lootens}
\email{ll708@cam.ac.uk}
\affiliation{Department of Applied Mathematics and Theoretical Physics, University of Cambridge,\\ Wilberforce Road, Cambridge, CB3 0WA, United Kingdom}


\begin{abstract}
Despite recent advances in the lattice representation theory of (generalized) symmetries, many simple quantum spin chains of physical interest are not included in the rigid framework of fusion categories and weak Hopf algebras. We demonstrate that this problem can be overcome by relaxing the requirements on the underlying algebraic structure, and show that general matrix product operator symmetries are described by a pre-bialgebra. As a guiding example, we focus on the anomalous $\mathbb Z_2$ symmetry of the XX model, which manifests the mixed anomaly between its $U(1)$ momentum and winding symmetry. We show how this anomaly is embedded into the non-semisimple corepresentation category, providing a novel mechanism for realizing such anomalous symmetries on the lattice. Additionally, the representation category which describes the renormalization properties is semisimple and semi-monoidal, which provides a new class of mixed state renormalization fixed points. Finally, we show that up to a quantum channel, this anomalous $\mathbb Z_2$ symmetry is equivalent to a more conventional MPO symmetry obtained on the boundary of a double semion model. In this way, our work provides a bridge between well-understood topological defect symmetries and those that arise in more realistic models.
\end{abstract}
\maketitle

\section{Introduction}

Symmetries play a vital role in understanding the quantum many-body problem, both when considering discrete lattice models and continuum quantum field theory (QFT). In fact, when trying to connect lattice and continuum approaches, imposing the correct symmetries on the lattice often provides the only guide to the desired continuum limit. An important aspect of symmetries in quantum many-body systems is their anomaly, which serves as an indicator of how the symmetry acts on the system. In QFT, anomalies provide obstructions to gauging the symmetry, i.e., turning the global symmetry action into a local one by adding gauge degrees of freedom. On the lattice, anomalies are often understood as the inability to realize the symmetry action in an on-site way, although a general definition of lattice anomalies is the subject of ongoing research \cite{Seifnashri:2025vhf,Tu:2025bqf,Shirley:2025yji}. One of the main drivers of these advancements is the problem of discretizing chiral fermions \cite{nielsen1981no,friedan1982proof}. These are well-known to possess anomalous symmetries, which for a long time were thought to directly prohibit their lattice regularization.

In more recent years, it was understood that anomalous symmetries can be realized on the lattice, provided that one allows them to act in a correlated manner on neighboring degrees of freedom. Such symmetries are naturally represented as matrix product operators (MPO) \cite{verstraete2004matrix,pirvu2010matrix,haegeman2017diagonalizing}, a type of tensor network that explicitly encodes the non-trivial entanglement structure of these symmetry operators \cite{buerschaper2014twisted,csahinouglu2021characterizing,bultinck2017anyons,williamson2016matrix}. A large class of MPO symmetries is obtained by considering weak Hopf algebras (WHA) and their representation categories \cite{lootens2021matrix,molnar2022matrix}, providing the lattice representation theory of (potentially non-invertible) fusion category symmetries \cite{bridgeman2023invertible}. A key advantage of this formal mathematical description is that, besides a classification of non-onsite MPO symmetries, it provides the necessary tools to gauge these symmetries \cite{lootens2023dualities} and exploit them in computational methods \cite{lootens2025entanglement}.

In this work, we investigate the underlying algebraic structure of an anomalous $\mathbb Z_2$ MPO symmetry of the Levin-Gu edge model \cite{levin2012braiding}, which is unitarily equivalent to the XX model. We show that this MPO symmetry goes beyond the framework provided by WHA, and that it can be understood as a representation of a non-semisimple non-counital pre-bialgebra. The corresponding representation categories are not fusion, and their associators allow for an explicit computation of the anomaly, as well as the renormalization properties of these symmetry operators. In turn, this leads us to a previously unknown family of renormalization fixed points of matrix product density operators (MPDO), possibly leading to new representations of topological order whose boundaries are these MPDOs. Finally, we show that this symmetry is related to an MPO symmetry described by a more conventional WHA by a quantum channel, meaning these symmetries are expected to coincide in the continuum limit. Our work provides a systematic method for extracting algebraic structures of MPO symmetries and bridges the gap between generic non-onsite symmetries of simple Hamiltonians and the better understood topological MPOs that often only arise at the boundaries of topological models.

\section{An anomalous $\mathbb Z_2$ symmetry}
We start by considering the Levin-Gu edge Hamiltonian \cite{levin2012braiding} on periodic boundary conditions:
\begin{equation}
    H^{(N)} = \sum_{i=1}^{N} X_i - Z_{i-1} X_i Z_{i+1},
\end{equation}
where $X,Y,Z$ denote the Pauli matrices. For $N = 0 \mod 4$, this model is unitarily equivalent to the XX model. The Hamiltonian $H^{(N)}$ is known to commute with the following quantum circuit $\mathbb Z_2$ symmetry \cite{chen2011two}
\begin{equation}
    U_{CZY}^{(N)} = \prod_{i=1}^{N} CZ_{i,i+1} \prod_{i=1}^{N} Z_i X_i.
\end{equation}
For $N$ even, this symmetry is obtained from the so-called momentum and winding $U(1)$ symmetries, respectively generated by $Q_M$ and $Q_W$ (see app. \ref{app:anomaly}), as $U_{CZY}^{(N)} = e^{i \pi Q_M} e^{i \pi Q_W}$. As we will see, $U_{CZY}$ is anomalous, which is a manifestation of the mixed anomaly between $U(1)_M$ and $U(1)_W$ \cite{chatterjee2025quantized,pace2025lattice}. Alternatively, $U_{CZY}$ can be written as a matrix product operator:
\begin{equation}
    U_{CZY}^{(N)} = \!\!\!\! \sum_{\{i\},\{j\}} \!\!\! \tr (A_1^{i_1j_1}\!\ldots A_1^{i_{N}j_{N}}) \ket{i_1 \ldots i_{N}}\!\bra{j_1 \ldots j_{N}}
\end{equation}
where the nonzero components of the rank-4 tensor $A_1$ are given by
\begin{equation}
    A_1^{01} =
    \begin{pmatrix}
        1 & 1 \\
        0 & 0
    \end{pmatrix}, \qquad
    A_1^{10} = 
    \begin{pmatrix}
        0 & 0 \\
        -1 & 1
    \end{pmatrix}.
\end{equation}
To construct an MPO algebra, we consider the product of two $U_{CZY}$ MPOs, leading to a bond dimension 4 MPO tensor $\tilde{A}_0$ defined as
\begin{equation}
    \tilde{A}_0^{00} = A_1^{01} \otimes A_1^{10}, \quad \tilde{A}_0^{11} = A_1^{10} \otimes A_1^{01},
\end{equation}
and all other components zero. Via a change of basis $X_{1,1}$ on the virtual degrees of freedom, these MPO tensors are equivalent to
\begin{align}
    X_{1,1} \tilde{A}_0^{00} (X_{1,1})^{-1} &= 
    \begin{pmatrix}
        0 & -1 & 1 \\
        0 & 1 & -1 \\
        0 & 0 & 0
    \end{pmatrix} 
    \oplus 0 =: A_0^{00} \oplus 0, \nonumber \\
    X_{1,1} \tilde{A}_0^{11} (X_{1,1})^{-1} &= 
    \begin{pmatrix}
        0 & 1 & 1 \\
        0 & 1 & 1 \\
        0 & 0 & 0
    \end{pmatrix}
    \oplus 0 =: A_0^{11} \oplus 0.
\end{align}
The MPO tensor $A_0$ is not injective, as the matrices $A_0^{00}$ and $A_0^{11}$ do not generate the full $3 \times 3$ matrix algebra. Despite this, it cannot be further decomposed into a direct sum of injective MPOs, and we will trace this property of the MPO back to the non-semisimplicity of the underlying algebraic structures. On periodic boundary conditions, the MPO generated by $A_0$ is just the identity, as expected from $(U_{CZY})^2 = \mathbb 1$.

To verify that this symmetry is anomalous, we consider the other possible products of the MPOs generated by $A_0$ and $A_1$, and obtain the similarity transformations $X_{a,b}$ with $a,b \in \mathbb Z_2$ satisfying
\begin{align}
    X_{0,1}\left(\sum_{j} A_0^{ij} \otimes A_1^{jk}\right) (X_{0,1})^{-1} &= A_1^{ik} \oplus 0_4, \nonumber\\
    X_{1,0}\left(\sum_{j} A_1^{ij} \otimes A_0^{jk}\right) (X_{1,0})^{-1} &= A_1^{ik} \oplus 0_4, \nonumber\\
    X_{0,0}\left(\sum_{j} A_0^{ij} \otimes A_0^{jk}\right) (X_{0,0})^{-1} &= A_0^{ik} \oplus 0_6,
    \label{eq:Xdef}
\end{align}
where $0_n$ denotes an $n \times n$ zero matrix; explicit expressions for $X_{a,b}$ can be found in app. \ref{app:fusion-tensor-expl}. We now define the matrices $Y_{a,b}$ and its right-inverse $(Y_{a,b})^{-1}$ by removing rows and columns of $X_{a,b}$ and $(X_{a,b})^{-1}$, respectively, such that
\begin{equation}
    Y_{a,b}\left(\sum_{j} A_a^{ij} \otimes A_b^{jk}\right) = A_{a+b}^{ik} Y_{a,b}.
    \label{eq:Ydef}
\end{equation}
The $Y_{a,b}$ are referred to as fusion tensors and implement the global multiplication property of these MPOs at the level of the local tensors; graphically, eq.~\eqref{eq:Ydef} is depicted as
\begin{equation}
    \begin{array}{c}
        \begin{tikzpicture}[scale=1,baseline={([yshift=-0.75ex] current bounding box.center)}]
            \def\rd{0.247};
            \def\rlen{0.318};
            \def\wth{0.25};
            \whaMorg{(0,0)}{2}{$A$}{white};
            \whaMorg{(0,\whady)}{2}{$A$}{white};
            \xTensor{(-0.765,\whady*0.5)}{0.25}{\whadx*0.6}{0.568}{$Y$};
            \draw[Virtual] (0.4,0.15+\whady) node {\scriptsize $a$};
            \draw[Virtual] (0.4,0.15) node {\scriptsize $b$};
            \draw[Virtual] (-1.4,0.15+\whady*0.5) node {\scriptsize $a\!+\!b$};
        \end{tikzpicture}
    \end{array}
    =
    \begin{array}{c}
        \begin{tikzpicture}[scale=1,baseline={([yshift=-0.75ex] current bounding box.center)}]
            \def\rd{0.247};
            \def\rlen{0.318};
            \def\wth{0.25};
            \whaMorg{(0,\whady*0.5)}{2}{$A$}{white};
            \xdTensor{(0.765,\whady*0.5)}{0.25}{\whadx*0.6}{0.568}{\whady*0.5}{$Y$};
            \draw[Virtual] (1.2,0.15+\whady) node {\scriptsize $a$};
            \draw[Virtual] (1.2,0.15) node {\scriptsize $b$};
            \draw[Virtual] (-0.6,0.15+\whady*0.5) node {\scriptsize $a\!+\!b$};
        \end{tikzpicture}
    \end{array}.
\end{equation}
It is well understood that the failure of this local multiplication to be associative provides an obstruction to making this symmetry on-site \cite{Seifnashri:2025vhf}, indicating an anomaly. To quantify this, one defines an associator $\omega \in H^3(\mathbb Z_2,U(1)) \simeq \mathbb Z_2$ as
\begin{equation}
    Y_{a+b,c} (Y_{a,b} \otimes \mathbb 1_c) = \omega(a,b,c) Y_{a,b+c} (\mathbb 1_a \otimes Y_{b,c}).
\end{equation}
Using the explicit expressions for $Y_{a,b}$ (see app. \ref{app:fusion-tensor-expl}), we find
\begin{equation}
    \omega(a,b,c) =
    \begin{cases}
        -1, \quad &a = b = c = 1,\\
        1, \quad &\text{otherwise}.
    \end{cases}
\end{equation}
This is indeed a representative of the non-trivial element in $H^3(\mathbb Z_2,U(1))$, and as such, this $\mathbb Z_2$ symmetry is anomalous. In \cite{chen2011two}, a similar computation is performed, but there, the off-diagonal blocks in $A_0$ are projected out by hand, which obscures the underlying algebraic structures as in that case eq. ~\eqref{eq:Ydef} is not satisfied.

\begin{table*}
    \begin{tabular}{|c|c|c|c|c|c|c|}
    \hline
        Indecomposable module &  dimension & simple & projective & basis & $\mathrm{rad}(P)$ & $P/\mathrm{rad}(P)$\\
        \hline
      $P_0$ & 3 & no & yes & $\{e^2,e^4,e^0-e^3-e^5-e^8\}$  & $P_1$ & $S_0$\\
      $P_1$   & 2  & no & yes & $\{e^1,e^3\}$ & $S_0$ & $S_1$\\
      $P_2 (=S_2) \cong P_3 $ & 2 & yes &  yes & $\{e^5,e^7\}$, $\{e^6,e^8\}$& $\emptyset$ & $S_2$\\
      \hline
      $S_0$ & 1 & yes & no & - & - &-\\
      $S_1$ & 1 & yes & no & - & -& -\\
    \hline
    \end{tabular}
    \caption{Indecomposable modules $P_a$ from the regular module of the unitized algebra $\mA^*_+$, and the simple modules $S_a$ of which they are the projective covers. Here $\{e^1,\cdots,e^8\}$ relabels $\{e^0_{12},e^0_{13},e^0_{22},e^0_{23}, e^1_{11},e^1_{12},e^1_{21},e^1_{22}\}$ for simplicity of notation, and $e^0$ is the unit element of $\mA^*_+$. }
    \label{tbl:modules}
\end{table*}

\section{MPO representations of bialgebras}
\label{sec:algebra_coalgebra}
In this section, we derive the algebraic structures underlying the MPOs generated by $A_0$ and $A_1$, and show that they go beyond the current classification scheme for MPO symmetries in terms of  (weak) Hopf algebras and their representation categories. To this end, we consider the MPOs generated by the tensors $A_a$ with the additional insertion of a boundary condition $B_a$:
\begin{equation} \label{eq:mpo_multiplication}
    O^{(N)}(B_a) \, = \,
    \begin{tikzpicture}[scale=1,baseline={([yshift=-0.75ex] current bounding box.center)}]
        \mpo{(0,0)}{$A$}{$B_a$};
        \draw[Virtual] (4.4,0.7) node {\scriptsize $a$};
    \end{tikzpicture}
\end{equation}
The boundary conditions $B_0 \in \{e_0^{12},e_0^{13},e_0^{22},e_0^{23}\}$ and $B_1 \in \{e_1^{11},e_1^{12},e_1^{21},e_1^{22}\}$  lead to a maximal set of linearly independent MPOs for $N\geq 2$, where the matrices $e_0^{mn}$ and $e_1^{mn}$ are respectively defined as $3 \times 3$ and $2 \times 2$ matrix units with $e_a^{mn}=|n\rangle\langle m|$ (notice the transposition w.r.t.\ the usual notation) such that $O^{(1)}(e_a^{mn})=(A_a)_{mn}$. 

By virtue of eqs.~\eqref{eq:Xdef} and \eqref{eq:Ydef}, the MPOs $O^{(N)}(B_a)$ form a closed algebra under multiplication
\begin{align}
\label{eq:mpo_mult}
    &\hspace{2em}\begin{array}{c}
        \begin{tikzpicture}[scale=1,baseline={([yshift=-0.75ex] current bounding box.center)}]
            \mpo{(0,0)}{$A$}{$e_a^{mn}$};
            \mpo{(0,-0.812)}{$A$}{$e_b^{pq}$};
            \draw[Virtual] (4.4,0.7) node {\scriptsize $a$};
            \draw[Virtual] (4.4,0.7-\whady) node {\scriptsize $b$};
        \end{tikzpicture}
    \end{array}\\
    &\quad=\sum_{c,rs} \lambda_{(a,mn)(b,pq)}^{(c,rs)}
    \begin{array}{c}
        \begin{tikzpicture}[scale=1,baseline={([yshift=-0.75ex] current bounding box.center)}]
            \mpo{(0,0)}{$A$}{$e_c^{rs}$};
            \draw[Virtual] (4.4,0.7) node {\scriptsize $c$};
        \end{tikzpicture}
    \end{array}\nonumber
\end{align}
for every system size $N$. Therefore $O^{(N)}$ is a representation of an 8-dimensional algebra $\mathcal{A}$, a basis of which we denote the same as the corresponding boundary conditions $e_{a}^{mn}$: as a vector space,
\begin{equation}
    \mathcal{A} = \text{Span} \{ e_0^{12},e_0^{13},e_0^{22},e_0^{23}, e_1^{11},e_1^{12},e_1^{21},e_1^{22}\}.
\end{equation}
The multiplication $\lambda$ of $\mathcal{A}$ follows from eq.~\eqref{eq:Ydef}, and is given by
\begin{equation}
\label{eqn:multiplication-lambda}
        e_a^{mn} \cdot e_b^{pq} = \sum_{rs} [Y_{a,b}(e_a^{mn} \otimes e_b^{pq})Y_{a,b}^{-1}]_{sr}  e_{a+b}^{rs},
\end{equation}
which is written out in app.~\ref{app:explicit-mul-basis-trans} using the explicit expressions of $Y_{a,b}$. This algebra $\mathcal A$ is in fact a twisted group algebra of $\mathbb Z_2 \times \mathbb Z_2 \times \mathbb Z_2$ isomorphic to the direct sum of two 2-dimensional matrix algebras $\mathcal M_2 \oplus \mathcal M_2$, which is semisimple (note that the basis $\{e_a^{mn}\}$ is not the basis in which this structure is apparent; see app.~\ref{app:explicit-mul-basis-trans} for a basis rotation). We label the two irreducible representations of $\mA$ as $\phi_\vbnt$ with $\vbnt=1,2$.

Importantly, the MPO representation also defines an associative linear operation $\Delta: \mathcal{A}\to \mathcal{A} \otimes \mathcal{A}$, called comultiplication, by $\big(O^{(l_1)}\otimes O^{(l_2)}\big)\circ\Delta(e_a^{mn}):=O^{(l_1+l_2)}(e_a^{mn})$ for any $l_1,l_2$; meaning that it provides a relation between MPO representations on different sizes. In our basis $e_a^{mn}$ for the algebra $\mA$ we have the following identity:
\begin{equation*}
    \sum_{p}\begin{array}{c}
        \begin{tikzpicture}[scale=1,baseline={([yshift=-0.75ex] current bounding box.center)}]
            \mpoone{(0,0)}{$A$}{$e_a^{mp}$};
            \mpoone{(1.8,0)}{$A$}{$e_a^{pn}$};
            \draw[Virtual] (4.6,0.7) node {\scriptsize $a$};
            \draw[Virtual] (2.8,0.7) node {\scriptsize $a$};
        \end{tikzpicture}
    \end{array} = 
    \begin{array}{c}
        \begin{tikzpicture}[scale=1,baseline={([yshift=-0.75ex] current bounding box.center)}]
            \mpotwo{(0,0)}{$A$}{$e_a^{mn}$};
            \draw[Virtual] (2.55,0.7) node {\scriptsize $a$};
        \end{tikzpicture}
    \end{array},
    \label{eq:mpo_comultiplication}
\end{equation*} 
from which we can read off the comultiplication as
\begin{equation}
\label{eqn:A-comultiplication}
    \Delta(e_a^{mn}) = \sum_p e_a^{mp} \otimes e_a^{pn},
\end{equation}
where in both equations the sum $p$ only takes values for which $e_{a}^{mp}$ and $e_a^{pn}$ are both valid basis elements; for example, $\Delta(e_0^{22}) = e_0^{22} \otimes e_0^{22}$. 
The fact that the structure constants for the algebra in eq.~\eqref{eq:mpo_mult} do not depend on the system size $N$ guarantees that the comultiplication is compatible with the multiplication in $\mA$, meaning that $(\Delta \otimes \Id) \circ \Delta = (\Id\otimes \Delta ) \circ \Delta$ and $\Delta(xy) = \Delta(x) \Delta(y)$ for $x,y\in\mA$. This turns $\mA$ into a pre-bialgebra~\cite{molnar2022matrix} (see app.~\ref{app:intro-algebra} for formal definition). While we used a particular example for illustration, this method provides a general way for extracting the pre-algebraic structure of any consistent MPO symmetry.

As a result, the dual algebra $\mathcal A^*$ has a basis $\{e^a_{mn}\}$, with multiplication structure
\begin{equation}
\label{eqn:mul-As}
    e^a_{mn} e^b_{pq} = \delta_{ab} \delta_{np} e^a_{mq}.
\end{equation}
 That is, the defining representation of $\mathcal{A}^*$ is
\begin{equation*}
    \mathcal{A}^* = \left\{ \begin{pmatrix}
    .&a_1&a_2 &. &.\\
        .&a_3&a_4 &. & .\\
        .&.&.&.&.\\
         . &.&. & a_5 & a_6 \\
        .&.&. & a_7&a_8
    \end{pmatrix} \middle | a_1,\dots, a_8 \in \mathbb{C} \right\}.
\end{equation*}
This algebra does not have a unit, and is non-semisimple. The absence of the unit of $\mA^*$ implies the absence of the monoidal unit of $\mathrm{Rep}(\mA)$, as discussed later. The representation theory of $\mA^*$ is more involved. As a non-semisimple algebra, $\mA^*$ has reducible but indecomposable modules, i.e. it contains proper submodules but cannot be written as a direct sum of them. To obtain all its simple and indecomposable projective modules, one can unitize $\mA^*$ to obtain a unital algebra $\mA^*_+$ and decompose the regular module $m_\reg$ of $\mA^*_+$ to a direct sum of projective indecomposable modules $P_\hbnt$, 
\begin{equation}
m_\reg \cong P_1 \oplus P_2 \oplus P_3 \oplus P_0,    
\end{equation}
where $P_2\cong P_3$ is 2-dim, $P_1$ is 2-dim and $P_0$ is 3-dim. The properties are derived in app.~\ref{app:rep} and summarized in Table~\ref{tbl:modules}.  
The three simple module $S_\hbnt$ of $\mA^*_+$ can then be obtained by quotienting $P_\hbnt$ by its radical $\text{rad}(P_\hbnt)$. We label the irreducible representation corresponding to the simple module $S_\hbnt$ as $\psi_{S_\hbnt}$.

To study the representation category of $\mA$, we need its comultiplication structure to allow taking tensor products of representations. Given $\phi_\vbnt,\phi_{\mathbf{b}}\in\text{Irr}(\mA)$ where $\mA$ is semisimple, tensor product representations can be decomposed into a direct sum of irreducible representations,
\begin{equation}
\label{eqn:tensor-prod-rep-main}
    \phi_\vbnt \boxtimes \phi_{\mathbf{b}}:=(\phi_\vbnt\otimes \phi_{\mathbf{b}})\circ\Delta_\mA \simeq \bigoplus_{\mathbf{c}\in\text{Irr}(\mA)} \bo_{N_{\mathbf{ab}}^{\mathbf{c}}} \otimes \phi_{\mathbf{c}}.
\end{equation}
The non-negative integers $N_{\mathbf{ab}}^{\mathbf{c}}$ encode the fusion rules of the category $\mathcal{D}=\text{Rep}(\mA)$, which are
\begin{equation}
\phi_{\mathbf{a}} \boxtimes \phi_{\mathbf{b}} \simeq \phi_{\mathbf{1}} \oplus \phi_{\mathbf{2}}, \quad \forall \mathbf a, \mathbf b.
\end{equation}
The absence of the monoidal unit in $\text{Rep}(\mA)$ is a consequence of the absence of unit in $\mA^*$. As such, this is not a monoidal category, but rather is referred to as a semi-monoidal category \cite{kock2008elementary}.

Similarly, the multiplication structure on $\mA$ defines a comultiplication structure on $\mA^*$, which allows taking tensor products of representations of $\mA^*$ (as an algebra). Since $\mA^*$ is non-semisimple, the corresponding representation category $\mathcal C = \text{Rep}(\mA^*)$ is non-semisimple as well. Taking the indecomposable representations $\psi_{P_0}, \psi_{P_1}, \psi_{P_2}$ together with the irreducible representations $\psi_{S_0}, \psi_{S_1}$, we find the following fusion rules:
\begin{align*}
    \psi_{P_0} &\boxtimes \psi_{P_0} \simeq \psi_{P_0} \oplus \bo_6 \otimes \psi_{S_0},\\
    \psi_{P_1} &\boxtimes \psi_{P_1} \simeq \psi_{P_1} \oplus \bo_2 \otimes \psi_{S_0},\\
    \psi_{P_2} &\boxtimes \psi_{P_2} \simeq \psi_{P_0} \oplus \psi_{S_0},\\
    \psi_{P_0} &\boxtimes \psi_{P_1} \simeq \psi_{P_1} \boxtimes \psi_{P_0} \simeq \psi_{P_0} \oplus \bo_3 \otimes \psi_{S_0},\\
    \psi_{P_0} &\boxtimes \psi_{P_2} \simeq \psi_{P_2} \boxtimes \psi_{P_0} \simeq \psi_{P_2} \oplus \bo_4 \otimes \psi_{S_0},\\
    \psi_{P_1} &\boxtimes \psi_{P_2} \simeq \psi_{P_2} \boxtimes \psi_{P_1} \simeq \psi_{P_2} \oplus \bo_2 \otimes \psi_{S_0},\\
    \psi &\boxtimes \psi_{S_0} \simeq \psi_{S_0} \boxtimes \psi \simeq \bo_{\text{dim}(\psi)} \otimes \psi_{S_0}, \quad \forall \psi,\\
    \psi &\boxtimes \psi_{S_1} \simeq \psi_{S_1} \boxtimes \psi \simeq \psi, \quad \forall \psi.
\end{align*}
This is a non-semisimple monoidal category, with unit $\psi_{S_1}$ and associators that can be computed as before. One possible semisimplification of $\mathcal C$ is obtained by restricting to the indecomposable representations $\psi_{P_0}$ and $\psi_{P_2}$ which as shown above yields the semion category $\text{Vec}_{\mathbb Z_2}^\omega$, i.e. the fusion category of $\mathbb Z_2$ graded vector spaces with nontrivial associator $\omega$ \cite{etingof2015tensor}.

\section{MPDO RG fixed points}
\label{sec:mpo-reconstruction}
Conversely to the approach taken above, the algebraic structure of $\mA$ and knowledge of representation lead to the construction of the MPO tensors, and in particular, the construction of renormalization fixed points of MPDO. This is significant because it goes beyond the current framework of MPDO fixed point construction based on $C^*$-weak Hopf algebra. We speculate that this implies new representations of topological order whose boundaries are this new family of MPDO RFPs. 

The MPO tensor construction utilizes the representation $\psi$ of $\mA^*$ and the representation $\phi$ of $\mA$~\cite{molnar2022matrix},
\begin{equation}
\label{eqn:Ma-def}
\begin{array}{c}
        \begin{tikzpicture}[scale=1.,baseline={([yshift=-0.65ex] current bounding box.center)}]
		\draw (-0.75,0) node {$\alpha$};
		\draw (0.75,0) node {$\beta$};
		\draw (0,0.75) node {$i$};
        \draw (0,-0.75) node {$j$};
        \whaMorg{(0,0)}{2}{\small $A$}{white};
        \draw[Virtual] (0.5,0.15) node {\scriptsize $\psi$};
        \draw[] (-0.15,0.5) node {\scriptsize $\phi$};
        \end{tikzpicture}
        \end{array}= \sum_{I\in\basis} [\phi(e_I)]_{ij} [\psi(e^I)]_{\alpha\beta}.
\end{equation}
In our example, taking $\phi=\phi_{\mathbf{1}}$ with $\psi=\psi_{S_2}=\psi_{P_2}$ generates precisely the tensor $A_1$ defined in the previous section, and with $\psi=\psi_{P_0}$ generates the tensor $A_0$. 

A tensor $M$
generates a valid matrix-product density operator (MPDO) for system size $N$
\begin{equation}
\begin{aligned}
     \rho^{(N)}(M)=&\sum_{\lbrace i,j\rbrace}\tr\left( M^{i_1 j_1} M^{i_2 j_2}\cdots M^{i_N j_N}\right)\\
     &\quad  |i_1 i_2 \cdots i_N\rangle\langle j_1 j_2 \cdots j_N|,
\end{aligned}
\end{equation}
if $\rho^{(N)}=(\rho^{(N)})^\dg\geq 0$~\cite{verstraete2004matrix,zwolak2004mixed,cirac2017matrix}. A tensor $M$ generating an MPDO is called a renormalization fixed point (RFP) if there exist two quantum channels, $\mT$ and $\mS$, that act on the physical space and fulfill the condition~\cite{cirac2017matrix,kato2024exact,liu2025parent}
\begin{align}
    \label{eqn:RFP-def-main}
    \raisebox{1.3ex}{
        \begin{tikzpicture}[scale=1.,baseline={([yshift=-0.75ex] current bounding box.center)}
        ]
       \whaM{(0,0)}{2};
        \draw [->, thick](1.,0.4) to [out=45,in=135] (3*\whadx-1.,0.4);
        \draw (1.5*\whadx,0.85) node {$\mT$};
        \draw [->, thick](3*\whadx-1,-0.4) to [out=225,in=-45] (1.,-0.4);
        \draw (1.5*\whadx,-0.32) node {$\mS$};
        \end{tikzpicture}
        }
        \begin{tikzpicture}[scale=1.,baseline={([yshift=-0.75ex] current bounding box.center)}
        ]
         \whaM{(3*\whadx,0)}{2};
        \whaM{(4*\whadx,0)}{2};
        \end{tikzpicture}
        \,.
    \end{align}
To generate the MPDO renormalization fixed point tensor from the pre-bialgebra $\mA$, we define MPO tensors $\{\tilde{M}_\vbnt\}$ 
where $\tilde{M}_\vbnt$ is constructed as eq.~\eqref{eqn:Ma-def} with $\phi=\phi_{\vbnt}\in\mathrm{Irr}(\mA)$ and $\psi$ a faithful representation of $\mA^*$. The fixed-point tensor $M$ is constructed by taking a suitable superposition of $\tilde{M}_\vbnt$. 

\begin{thm}
\label{prop:sufficient-main}
    Let $\mA$ be an associative semisimple $C^*$-pre-bialgebra, with $\mA^*$ possibly lacking a unit and not necessarily semisimple. If (1) the fusion multiplicities $N^{\mathbf{a}}_{\mathbf{bc}}$ 
  are transitive, and (2) for any $\vbnt\in\mathrm{Irr}(\mA)$ there exists $\vbnt^*\in\mathrm{Irr}(\mA)$ such that $N_{\vbnt^*}=N_\vbnt^T$ where $(N_\vbnt)_{\mathbf{cb}}=N^{\mathbf{c}}_{\mathbf{ab}}$, then the matrix product operator
    \begin{equation}
    \label{eqn:fixed-point-generate-main}
    \begin{array}{c}
        \begin{tikzpicture}[scale=1.,baseline={([yshift=-0.65ex] current bounding box.center)}]
        \whaM{(0,0)}{2};
        \end{tikzpicture}
        \end{array}
       =\bigoplus_{\vbnt\in\mathrm{Irr}(\mA)}\frac{d_\vbnt}{\mathrm{FPdim}(\mathcal D)} \begin{array}{c}
        \begin{tikzpicture}[scale=1.,baseline={([yshift=-0.65ex] current bounding box.center)}]
        \whaMorg{(0,0)}{2}{$A$}{white};
        \draw[] (-0.2,0.4) node {\scriptsize $\phi_\vbnt$};
        \end{tikzpicture}
        \end{array}
    \end{equation}
    with $d_\vbnt$ being the spectral radius of matrix $N_\vbnt$ and $\mathrm{FPdim}(\mathcal D):=\sum_\vbnt d_\vbnt^2$, satisfies the renormalization fixed point conditions given in eq.~\eqref{eqn:RFP-def-main}. 
\end{thm}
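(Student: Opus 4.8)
The plan is to reduce the renormalization fixed point condition \eqref{eqn:RFP-def-main} to a purely representation-theoretic identity about the fusion category $\mD=\mathrm{Rep}(\mA)$, construct the channels $\mT,\mS$ explicitly from the fusion intertwiners, and then check the CPTP property and the fixed point equation block by block. First I would block two copies of the generating tensor \eqref{eqn:fixed-point-generate-main} and contract the shared bond. Because each block $A[\phi_\mathbf{a}]$ is built from \eqref{eqn:Ma-def} with an irreducible $\phi_\mathbf{a}\in\mathrm{Irr}(\mA)$ on the physical space and a common faithful $\psi$ of $\mA^*$ on the bond space, the compatibility of $\Delta$ with the multiplication of $\mA$ (established above) shows that the contracted bond turns the two-site physical space into the representation $(\phi_\mathbf{a}\otimes\phi_\mathbf{b})\circ\Delta_\mA=\phi_\mathbf{a}\boxtimes\phi_\mathbf{b}$. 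By semisimplicity of $\mA$ and \eqref{eqn:tensor-prod-rep-main} this decomposes as $\bigoplus_\mathbf{c}\bo_{N^{\mathbf{c}}_{\mathbf{ab}}}\otimes\phi_\mathbf{c}$, so that the two-site tensor is equivalent, via the intertwiners $W^{\mathbf{c}}_{\mathbf{ab}}:\phi_\mathbf{c}\hookrightarrow\phi_\mathbf{a}\boxtimes\phi_\mathbf{b}$, to a direct sum of single-site tensors $A[\phi_\mathbf{c}]$. This is the essential coarse-graining relation: blocking exchanges two physical sites for one at the cost of the fusion multiplicities.

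Next I would pin down the weights. Requiring that this coarse-graining send the weighted tensor back to itself forces $\sum_\mathbf{b} N^{\mathbf{c}}_{\mathbf{ab}}\,d_\mathbf{b}=d_\mathbf{a}\,d_\mathbf{c}$ together with the normalization $\mathrm{FPdim}(\mD)=\sum_\mathbf{a} d_\mathbf{a}^2$; that is, the vector $(d_\mathbf{b})_\mathbf{b}$ must be a common eigenvector of all fusion matrices $N_\mathbf{a}$ with eigenvalue the spectral radius $d_\mathbf{a}$. Hypothesis (1), transitivity of the fusion multiplicities, is exactly what guarantees — through the Perron-Frobenius theorem applied to the nonnegative matrices $N_\mathbf{a}$ — that such a common, strictly positive eigenvector exists and is unique up to scale, so the weights $d_\mathbf{a}/\mathrm{FPdim}(\mD)$ in \eqref{eqn:fixed-point-generate-main} are well defined and positive. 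I would verify directly that with these weights $\sum_{\mathbf{a},\mathbf{b}}\tfrac{d_\mathbf{a} d_\mathbf{b}}{\mathrm{FPdim}(\mD)^2}N^{\mathbf{c}}_{\mathbf{ab}}=\tfrac{d_\mathbf{c}}{\mathrm{FPdim}(\mD)}$, which is the scalar identity underlying the fixed point condition.

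Then I would build the channels. I would assemble the Kraus operators of $\mT$ and $\mS$ from the intertwiners $W^{\mathbf{c}}_{\mathbf{ab}}$ and their adjoints, rescaled by factors $\sqrt{d_\mathbf{a} d_\mathbf{b}/d_\mathbf{c}}$ dictated by the eigenvalue relation above. Complete reducibility (semisimplicity of $\mA$) supplies the resolution of identity $\sum_{\mathbf{a}\mathbf{b}\mathbf{c},\mu}W^{\mathbf{c},\mu}_{\mathbf{ab}}(W^{\mathbf{c},\mu}_{\mathbf{ab}})^\dg=\bo$ on the two-site physical space, which gives trace preservation once the Perron-Frobenius rescaling is in place, while the $C^*$ structure of $\mA$ lets these intertwiners be chosen as genuine isometries, making the maps completely positive. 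Hypothesis (2), $N_{\mathbf{a}^*}=N_\mathbf{a}^T$, furnishes the dual object $\mathbf{a}^*$ and hence matching intertwiners in the reverse direction; this is what allows the second channel to exist as a CPTP map and is also what guarantees Hermiticity and positivity of the underlying MPDO $\rho^{(N)}(M)$.

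The main obstacle is the interlocking of trace preservation with the Perron-Frobenius normalization: one must show that the single choice of weights $d_\mathbf{a}/\mathrm{FPdim}(\mD)$ simultaneously makes the coarse-graining reproduce $M$ and makes both $\mT$ and $\mS$ separately CPTP, and that the two hypotheses together — transitivity fixing the common eigenvector and self-duality fixing the backward maps — are precisely sufficient, with neither dispensable. I would close the argument by substituting the explicit channels into \eqref{eqn:RFP-def-main}, collapsing the intertwiner contractions using the associativity relations of the fusion tensors $Y_{\mathbf{a},\mathbf{b}}$ together with the eigenvalue equation, and verifying the fixed point identity block by block in $\mathbf{c}$.
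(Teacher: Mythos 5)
Your overall architecture matches the paper's: block two copies of the tensor, use compatibility of $\Delta$ with the multiplication to turn the contracted bond into $(\phi_{\mathbf{a}}\otimes\phi_{\mathbf{b}})\circ\Delta$, decompose by semisimplicity into $\bigoplus_{\mathbf{c}}\mathbb{1}_{N^{\mathbf{c}}_{\mathbf{ab}}}\otimes\phi_{\mathbf{c}}$ via isometric intertwiners, and then fix the weights by Perron--Frobenius. One structural difference: the paper does not construct $\mathcal T$ and $\mathcal S$ explicitly. It reduces the RFP condition to the characterization in terms of the vertical canonical form (Theorem~\ref{thm:fusion-main}): one only needs the isometries $W_{\mathbf{ab}}$, the positivity of the blocks $\chi_{\mathbf{a,b,c}}$, and the scalar identity $m_{\mathbf{c}}=\sum_{\mathbf{ab}}\mathrm{Tr}[\chi_{\mathbf{a,b,c}}]m_{\mathbf{a}}m_{\mathbf{b}}$. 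Your plan to assemble Kraus operators from the intertwiners and check CPTP by hand is a legitimate alternative, but it amounts to reproving that characterization theorem and is left as a sketch at exactly the point where the real work would be.

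The concrete gap is in your weight derivation, where you misassign the roles of the two hypotheses. Transitivity (hypothesis (1)) gives, via Perron--Frobenius applied to $\sum_{\mathbf{a}}\tilde N_{\mathbf{a}}$, a common strictly positive eigenvector $\mathbf{v}$ of all the $N_{\mathbf{a}}$, and hence that the common eigenvalue $d_{\mathbf{a}}$ is the spectral radius of $N_{\mathbf{a}}$ and that $d_{\mathbf{a}}d_{\mathbf{b}}=\sum_{\mathbf{c}}N^{\mathbf{c}}_{\mathbf{ab}}d_{\mathbf{c}}$, i.e.\ $(d_{\mathbf{a}})$ is a \emph{left} eigenvector: $\mathbf{d}^{T}N_{\mathbf{a}}=d_{\mathbf{a}}\mathbf{d}^{T}$. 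It does \emph{not} give your identity $\sum_{\mathbf{b}}N^{\mathbf{c}}_{\mathbf{ab}}d_{\mathbf{b}}=d_{\mathbf{a}}d_{\mathbf{c}}$, which asserts that $\mathbf{d}$ is also a \emph{right} eigenvector. In the standard fusion-ring setting one identifies the right Perron--Frobenius eigenvector with the dimension vector by normalizing at the unit object, but here $\mathrm{Rep}(\mathcal A)$ may have no monoidal unit (this is precisely the situation of the paper, where $\mathcal A^{*}$ is non-unital), so that argument is unavailable. The missing step is supplied by hypothesis (2): $N_{\mathbf{a}^{*}}=N_{\mathbf{a}}^{T}$ converts the left-eigenvector relation into $\sum_{\mathbf{ab}}N^{\mathbf{c}}_{\mathbf{ab}}d_{\mathbf{a}}d_{\mathbf{b}}=\sum_{\mathbf{a}^{*}\mathbf{b}}d_{\mathbf{b}}(N_{\mathbf{a}^{*}})_{\mathbf{bc}}d_{\mathbf{a}^{*}}=\mathrm{FPdim}(\mathcal D)\,d_{\mathbf{c}}$, using $d_{\mathbf{a}}=d_{\mathbf{a}^{*}}$. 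You instead reserve hypothesis (2) for ``backward channels'' and for positivity of $\rho^{(N)}(M)$; the latter is in fact a separate issue that the paper handles outside the theorem (it requires the boundary element to factor as $x=yy^{*}$, verified case by case) and is not a consequence of (2). Finally, the isometry of $W_{\mathbf{ab}}$ follows specifically from the cohomomorphism property $\Delta\circ *=(*\otimes *)\circ\Delta$, which is worth stating rather than attributing loosely to ``the $C^{*}$ structure.''
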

We leave the proof in app.~\ref{app:proof-suff} that utilizes the vertical canonical form of the MPDO tensor. Applying this general result to our example with faithful representation $\psi=\psi_{P_0}\oplus \psi_{S_2}$ leads to the fixed-point tensor $M$ with physical dimension 4 and bond dimension 5 that generates the non-trivial density matrix~\cite{lessa2025mixed}
\begin{equation}
    \rho_{CZY}^{(2N)}=\frac{1}{2^{2N}}(\bo_2^{\otimes 2 N}+U_{CZY}^{(2N)}) 
\end{equation}
up to local unitaries. 
With the periodic boundary condition, $\psi_{S_1}$ effectively generate the same state as $\psi_{P_0}$ and we can as well choose  $\psi=\psi_{S_1}\oplus \psi_{S_2}$, which leads to the fixed-point tensor $M$ with smaller bond dimension 3 and still generates $\rho_{CZY}^{(N)}$. 

Reversely, starting from the fixed-point tensor $M$ of $\rho_{CZY}^{(N)}$, one can also reconstruct the bialgebraic structure of $\mA$ by bringing the tensor $M$ into vertical canonical form.

\section{Relation to the double semion model}
\label{sec:relation}
A related density matrix is the boundary of the double semion model~\cite{xu2018tensor} $\rho^{(2N)}_{\bdy}$, which can be obtained from $\rho_{CZY}^{(N)}$ by local quantum channels,
\begin{equation}\label{eq:CZX to semion}
    \rho_{\bdy}^{(2N)}=\mathcal{E}^{\otimes N}(u^{\otimes N} \rho_{CZY}^{(N)} (u^\dg)^{\otimes N})
\end{equation}
with $\mathcal{E}(\rho)=\CNOT (\rho\otimes|0\rangle\langle 0|)\CNOT$ and single-site unitary $u=e^{i(\pi/4)Z}$~\footnote{Here, $\rho_{\bdy}^{(2N)}=\mathcal{E}^{\otimes N}(\rho_{CZX}^{(N)})$ where $\rho_{CZX}^{(N)}=\prod_{i=1}^{N} CZ_{i,i+1} \prod_{i=1}^{N} X_i$ and $u^{\otimes N}$ is the unitary that relates $\rho_{CZX}$ and $\rho_{CZY}$. Strictly speaking, the mapping from $\rho_{CZX}$ to $\rho_{CZY}$ is exact only for $N=0 \mod 4$. For $N=0\mod 2$, we can use $(u\otimes u^\dg)^{\otimes N/2}$}. Vice versa, $\rho_{CZY}^{(N)}$ can be recovered from $\rho_{\bdy}^{(2N)}$ via
\begin{equation}
\begin{aligned}
    \rho_{CZY}^{(N)} &= (u^\dg)^{\otimes N}\left(\mathcal{R}^{\otimes N}(\rho_{\bdy}^{(2N)})\right) u^{\otimes N},
\end{aligned}
\end{equation}
with $ \mathcal{R}(\rho) = \tr_2  [\CNOT (\rho) \CNOT]$, where $\tr_2$ denotes tracing out the second site. Using the MPDO tensor $M_\bdy$ that generates $\rho_{\bdy}^{(2N)}$ and brings it into the vertical canonical form, one can reconstruct a bialgebraic structure $\mA_\bdy$. The multiplication structure of $\mA_\bdy$ is the same as $\mA$, while the comultiplication structures are different. It turns out that the representation category of $\mA_\bdy$ is $\text{Vec}_{\mathbb{Z}_2}^\omega$ and $\mA_\bdy$ is a $C^*$-weak Hopf algebra.  

\section{Discussion and outlook}
We have demonstrated that, given an MPO symmetry, one can systematically extract its underlying bialgebraic structures. By applying this to the anomalous $\mathbb Z_2$ symmetry considered in this work, we find that it lies outside the known framework of weak Hopf algebras and their representation categories. In particular, the monoidal category that describes the symmetry operators is not semisimple, and we recover the anomaly as an associator upon semisimplification. Conversely, the category of representations of the MPO algebra itself is a semi-monoidal category due to the lack of a monoidal unit. While we focus on a particular example, our methodology can be readily applied to arbitrary MPO symmetries. In particular, in app.~\ref{sec:finite group generalization} we show how our approach generalizes to the finite group case. By uncovering the algebraic structures underlying MPO symmetries, we open the door to applying the formal machinery for gauging these symmetries, developed for ordinary fusion category symmetries, to more general symmetries often encountered in physically relevant Hamiltonians. 

Furthermore, identifying this algebraic structure beyond weak Hopf algebras leads to the construction of a previously unknown family of MPDO fixed points. This prompts the question: do \textit{all} MPDO RFPs satisfy the conditions of Theorem~\ref{prop:sufficient-main}? Addressing this would yield a necessary and sufficient condition characterizing the algebraic structure of all MPDO RFPs. Another important question follows: can we classify the algebraic structures underlying MPDO RFPs in relation to the classification of mixed-state quantum phases~\cite{coser2019classification,de2022symmetry,ma2023average,sang2024mixed,ellison2024towards,sohal2025noisy,sun2025anomalous}? Some results are already known -- for instance, MPDO RFPs constructed from $C^*$-Hopf algebras necessarily belong to the trivial phase~\cite{ruiz2024matrix}. We leave the complete classification to future work. 

\section*{Acknowledgement}
We thank Arkya Chatterjee, Ignacio Cirac, Clement Delcamp, Marta Florido Llinàs, José Garre-Rubio, David Pérez-García, and Zhiyuan Wang for their insightful discussions. Y.L. and X.-Q.S. are supported by the Alexander von Humboldt Foundation. L.L. is supported by an EPSRC Postdoctoral Fellowship (grant No. EP/Y020456/1). K.~K. acknowledges support from MEXT-JSPS Grant-in-Aid for Transformative Research Areas (B), No. 24H00829; from JSPS KAKENHI Grant No. 23K17668. A.M. acknowledges support by the Austrian Science Fund(FWF) via Grants 10.55776/COE1, by the European Union – NextGenerationEU, and by the European Union’s Horizon 2020 research and innovation programme through Grant No. 863476”. F.V. is supported by EOS (grant No. 40007526), IBOF (grant No. IBOF23/064), and BOF- GOA (grant No. BOF23/GOA/021). The authors thank the Yukawa Institute for Theoretical Physics at Kyoto University, where part of this work was done during the YITP-I-25-02 on "Recent Developments and Challenges in Tensor Networks: Algorithms, Applications to Science, and Rigorous Theories."

\appendix

\section{U(1) symmetries of the Levin-Gu and XX model}
\label{app:anomaly}
For $N = 0 \mod 2$, the Levin-Gu Hamiltonian
\begin{equation}
    H_{\text{LG}} = \sum_{i=1}^{N} X_i - Z_{i-1} X_i Z_{i+1}.
\end{equation}
is equivalent to
\begin{equation}
    H_2 := U^\dag H_{\text{LG}} U = \sum_{i=1}^{N} X_i Z_{i+1} - Z_{i} X_{i+1}
\end{equation}
with
\begin{equation}
    U = \prod_{i=1}^{N/2} CZ_{2i,2i+1} X_{2i},
\end{equation}
which after blocking is an on-site unitary transformation. Furthermore, when $N = 0 \mod 4$, $H_2$ is equivalent to the XX model
\begin{equation}
    H_{\text{XX}} = \tilde U^\dag H_2 \tilde U = \sum_{i=1}^{N} X_i X_{i+1} + Z_i Z_{i+1}
\end{equation}
with
\begin{equation}
    \tilde U = \prod_{i=1}^{N/2} H_{2i} \prod_{i=1}^{N/4} X_{4i+1} X_{4i+2}.
\end{equation}
$H_2$ commutes with
\begin{equation}
    Q_M = \frac{1}{2} \sum_{i=1}^{N} Y_i, \quad Q_W =  \frac{N}{4} + \frac{1}{4} \sum_{i=1}^{N} Z_i Z_{i+1},
\end{equation}
so $H_{\text{LG}}$ commutes with
\begin{align}
    \tilde Q_M &:= U Q_M U^\dag = \frac{1}{2} \sum_{i=1}^{N/2} Z_{2i} Y_{2i+1} - Y_{2i} Z_{2i+1},\\
    \tilde Q_W &:= U Q_W U^\dag = \frac{N}{4} - \frac{1}{4} \sum_{i=1}^{N} Z_i Z_{i+1}.
\end{align}
These charges generate a compact $U(1)$ symmetry. Looking at the $\mathbb Z_2$ subgroups of these $U(1)$ symmetries, we find
\begin{equation}
    e^{i\pi \tilde Q_M} = \prod_{i=1}^N X_i, \quad e^{i\pi \tilde Q_W} = \prod_{i=1}^N CZ_{i,i+1} Z_i,
\end{equation}
and so
\begin{equation}
    U_{CZY} := \prod_{i=1}^N CZ_{i,i+1} Z_i X_i = e^{i\pi \tilde Q_W} e^{i\pi \tilde Q_M}. 
\end{equation}
We note that while $\tilde{Q}_M$ and $\tilde{Q}_W$ do not commute, the $\mathbb{Z}_2$ operators they generate commute $[e^{i\pi \tilde Q_M},e^{i\pi \tilde Q_W}]=0$. 

\section{Explicit form of fusion tensors}
\label{app:fusion-tensor-expl}
The explicit form of matrices $X_{a,b}$ are:
\begin{equation}
    X_{1,1} = \frac{1}{\sqrt{2}}
    \begin{pmatrix}
        0 & 1 & 1 & 0 \\
        0 & -1 & 1 & 0 \\
        -1 & 0 & 0 & 1 \\
        -1 & 0 & 0 & -1 
    \end{pmatrix},
\end{equation}
and
\begin{equation}
    X_{0,1} =
    \begin{pmatrix}
        0 & 0 & 1 & 0 & 0 & -1 \\
        0 & 0 & 0 & 1 & -1 & 0 \\
        0 & 1 & 0 & -1 & 0 & 0 \\
        -1 & 0 & -1 & 0 & 0 & -1 \\
        0 & -1 & 0 & 1 & -2 & 0 \\
        -1 & 0 & -1 & 0 & 0 & 2
    \end{pmatrix},
\end{equation}
and
\begin{equation}
    X_{1,0} =
    \begin{pmatrix}
        0 & 1 & 0 & 0 & 0 & 1 \\
        0 & 0 & 1 & 0 & 1 & 0 \\
        1 & -1 & 0 & 0 & 0 & 1 \\
        0 & 0 & 1 & -1 & -1 & 0 \\
        1 & -1 & 0 & 0 & 0 & -2 \\
        0 & 0 & 2 & 1 & 1 & 0
    \end{pmatrix},
\end{equation}
and
\begin{equation}
    X_{0,0} = 
    \begin{pmatrix}
        0 & 1 & 0 & 1 & 0 & 0 & 0 & 0 & 0 \\
        0 & 0 & 0 & 0 & 2 & 0 & 0 & 0 & 2 \\
        0 & 0 & 0 & 0 & 0 & 2 & 0 & 2 & 0 \\
        2 & 0 & 0 & 0 & -2 & 0 & 0 & 0 & -4 \\
        0 & 0 & 2 & 0 & 0 & 0 & 0 & 0 & 0 \\
        0 & -2 & 0 & 2 & 0 & 0 & 0 & 0 & 0 \\
        0 & 0 & 0 & 0 & 0 & 0 & 2 & 0 & 0 \\
        0 & 0 & 0 & 0 & 0 & -2 & 0 & 2 & 0 \\
        2 & 0 & 0 & 0 & -2 & 0 & 0 & 0 & 2
    \end{pmatrix}.
\end{equation}

The fusion tensors $Y_{a,b}$ are submatrices of $X_{a,b}$. Specifically, $Y_{1,1}$ is the first three rows of $X_{1,1}$, $Y_{0,1}$ is the first two rows of $X_{0,1}$, $Y_{1,0}$ is the first two rows of $X_{1,0}$, and $Y_{0,0}$ is the first three rows of $X_{0,0}$. The right-inverse $Y_{a,b}^{-1}$ is formed by taking the corresponding columns of $(X_{a,b})^{-1}$. 

\section{Explicit form of multiplication of $\mathcal{A},\mA^*$ and a basis transformation}
\label{app:explicit-mul-basis-trans}

Let's denote the basis of $\mA$ as
\begin{equation}
    \basis=\{e_0^{12},e_0^{13},e_0^{22},e_0^{23},e_1^{11},e_1^{12},e_1^{21},e_1^{22}\},
\end{equation}
and relabel the basis elements as
\begin{equation}
    \basis=\{e_1,e_2,e_3,e_4,e_5,e_6,e_7,e_8\}.
\end{equation}
The explicit form of multiplication of $\mA$ can be computed via eq.~\eqref{eqn:multiplication-lambda} using the explicit expressions of $Y_{a,b}$ and calculating $Y_{a,b}(e_a^{mn} \otimes e_b^{pq})Y_{a,b}^{-1}$. The result is 
\begin{equation}
\label{eq:mul-epl}
\begin{aligned}
    &\basis^T \basis = \\
    &\begin{pmatrix}
        e_3 & e_4 & e_1 & e_2 & -e_5 & -e_6 & e_7 & e_8\\
        e_4 & e_3 & e_2 & e_1 & e_6 & e_5 & -e_8 & -e_7\\
        e_1 & e_2 & e_3 & e_4 & e_5 & e_6 & e_7 & e_8\\
        e_2 & e_1 & e_4 & e_3 & -e_6 & -e_5 & -e_8 & -e_7\\
        e_5 & e_6 & e_5 & e_6 & 0 & 0 & \frac{e_4-e_2}{2} & \frac{e_3-e_1}{2}\\
        e_6 & e_5 & e_6 & e_5 & 0 & 0 & \frac{e_1-e_3}{2} & \frac{e_2-e_4}{2}\\
        -e_7 & -e_8 & e_7 & e_8 & \frac{-e_2-e_4}{2} & \frac{-e_1-e_3}{2} & 0 & 0\\
        -e_8 & -e_7 & e_8 & e_7 & \frac{e_1+e_3}{2} & \frac{e_2+e_4}{2} & 0 & 0
    \end{pmatrix}
\end{aligned}
\end{equation}
One can see the unit of $\mA$ is $1=e_3$. 

A more straightforward way to obtain the multiplication structure is to take the special case $N=2$ of eq.~\eqref{eq:mpo_mult}, where
\begin{equation}
\begin{aligned}
    &O^{(2)}(e_1)=-Z\otimes \bo,\quad O^{(2)}(e_2)=Z\otimes Z\\
    &O^{(2)}(e_3)=\bo\otimes\bo,\quad O^{(2)}(e_4)=-\bo\otimes Z\\
    &O^{(2)}(e_5)=\text{diag}[1,-1,0,0] X\otimes X\\
    &O^{(2)}(e_6)=\text{diag}[1,1,0,0] X\otimes X\\
    &O^{(2)}(e_7)=\text{diag}[0,0,-1,-1] X\otimes X\\
    &O^{(2)}(e_8)=\text{diag}[0,0,-1,1] X\otimes X
\end{aligned}
\end{equation}
are distinct operators. Namely, $O^{(2)}$ is a faithful representation of $\mA$. 
From this, one can compute $O^{(2)}(e_I) O^{(2)}(e_J)=\sum_K \lambda_{IJ}^K O^{(2)}(e_K)$ and obtain eq.~\eqref{eq:mul-epl}. 

To see that the algebra $\mA$ is isomorphic to the direct sum of matrix algebras $\mathcal{M}_2\oplus \mathcal{M}_2$, we use a basis transformation $R$ that brings basis $\{e_I\}$ to $\{f_I\}$, explicitly,
\begin{equation}
    R = \frac{1}{4}\begin{pmatrix}
        -1 & 1 & 1 & -1 & 0 & 0 & 0 & 0\\
        0 & 0 & 0 & 0 & 2 & 2 & 0 & 0\\
        0 & 0 & 0 & 0 & 0 & 0 & -2 & 2\\
        1 & 1 & 1 & 1 & 0 & 0 & 0 &0 \\
        -1 & -1 & 1 & 1 & 0 & 0 & 0 & 0\\
        0 & 0 & 0 & 0 & -2 & 2 & 0 & 0\\
        0 & 0 & 0 & 0 & 0 & 0 & -2 & -2\\
        1 & -1 & 1 & -1 & 0 & 0 & 0 & 0
    \end{pmatrix},
\end{equation}
such that $f_I = \sum_J R_{IJ} e_J$. The multiplication structure in the basis $\{f_I\}$ is related to that in the basis $\{e_I\}$ via
\begin{equation}
    (\lambda_f)_{IJ}^K= \sum_{I' J' K'} R_{II'} R_{JJ'} (R^{-1})_{K' K}(\lambda_e)_{I' J'}^{K'},
\end{equation}
and $\lambda_f$ is the multiplication structure constant of $\mathcal{M}_2\oplus \mathcal{M}_2$. A more straightforward way to see $\mA\cong \mathcal{M}_2\oplus \mathcal{M}_2$ is by observing the explicit form of $O^{(2)}(f_I)$. 

The multiplication of $\mA^*$ is given in eq.~\eqref{eqn:mul-As}, which comes from the comultiplication of $\mA$. In the basis $\basis=\{e_I\}$, the comultiplication structure is simple by using $O^{(l_1+l_2)}(e_a^{mn})=O^{(l_1)}\otimes O^{(l_2)}\circ\Delta(e_a^{mn})$ for any $l_1,l_2$. Let $A^{(l)}_a$ denote the $l$-fold blocking of the tensor $A_a$ in the horizontal direction, then $O^{(l)}(e_a^{mn})=(A^{(l)}_a)_{mn}$ and therefore
\begin{equation}
\begin{aligned}
    O^{(l_1+l_2)}(e_a^{mn})&=(A^{(l_1+l_2)}_a)_{mn}\\&=\sum_p (A^{(l_1)}_a)_{mp}\otimes (A^{(l_2)}_a)_{pn}\\
    &=\sum_p  O^{(l_1)}(e_a^{mp})\otimes  O^{(l_2)}(e_a^{pn}),
\end{aligned}
\end{equation}
which is eq.~\eqref{eqn:A-comultiplication} in the main text and leads to eq.~\eqref{eqn:mul-As}. 
Let us denote the dual basis of $\basis=\{e_I\}$ as $\basis^*=\{e^I\}$, the explicit form of multiplication of $\mA^*$ is then
\begin{equation}
    (\basis^*)^T \basis^*=\begin{pmatrix}
        0 & 0 & e^1 & e^2 & 0 & 0 & 0 & 0\\
        0 & 0 & 0 & 0 & 0 & 0 & 0 & 0 \\
        0 & 0 & e^3 & e^4 & 0 & 0 & 0 & 0\\
        0 & 0 & 0 & 0 & 0 & 0 & 0 & 0 \\
        0 & 0 & 0 & 0 & e^5 & e^6 & 0 & 0 \\
        0 & 0 & 0 & 0 & 0 & 0 & e^5 & e^6 \\
        0 & 0 & 0 & 0 & e^7 & e^8 & 0 & 0 \\
        0 & 0 & 0 & 0 & 0 & 0 & e^7 & e^8
    \end{pmatrix}.
\end{equation}
One can see that $\mA^*$ does not have a unit because $e^2\cdot x=0$ for any $x\in \mA^*$. Note that in the basis $\{f_I\}$, the comultiplication structure is more complicated, and the explicit form of $\Lambda_f$ can be obtained from $\Lambda_e$ via
\begin{equation}
    (\Lambda_f)^{IJ}_K=\sum_{I'J'K'} (R^{-1})_{I' I} (R^{-1})_{J' J} R_{KK'} (\Lambda_e)^{I'J'}_{K'}.
\end{equation} 

The operator $*$ in $\mA$ is 
\begin{equation}
\begin{aligned}
    &e_I^* = e_I,\quad I=1,2,3,4\\
    &e_5^* = e_8,\quad e_6^*=-e_7,
\end{aligned}
\end{equation}
which satisfies all axioms of the $*$ operator, and $O^{(2)}$ is a faithful $*$-representation.  

\section{Representations of $\mA,\mA^*$}
\label{app:rep}
\subsection{Representations of $\mA$}
As discussed in the main text, the basis elements of $\mA$ are chosen such that $O^{(1)}(e_a^{mn})=(A_a)_{mn}$. It turns out that $O^{(1)}$ is an irreducible representation which we identify as $\phi_\mathbf{1}$, and the explicit form is,
\begin{equation}
    \begin{aligned}
        &\phi_\mathbf{1}(e_1)=\phi_\mathbf{1}(e_4)=-Z\\
        &\phi_\mathbf{1}(e_2)=\phi_\mathbf{1}(e_3)=\bo\\
        &\phi_\mathbf{1}(e_5)=\phi_\mathbf{1}(e_6)=\begin{pmatrix}
            0 & 1\\
            0 & 0
        \end{pmatrix}\\
        &\phi_\mathbf{1}(e_8)=-\phi_\mathbf{1}(e_7)=\begin{pmatrix}
            0 & 0\\
            1 & 0
        \end{pmatrix}.
    \end{aligned}
\end{equation}
Since $\mA\cong\mathcal{M}_2\oplus\mathcal{M}_2$, there are two 2d irreducible representations $\phi_{\vbnt}$, with $\vbnt=\mathbf{1},\mathbf{2}$. $\phi_\mathbf{1}$ is identified using $\phi_\mathbf{1}(e_a^{mn}):=O^{(1)}(e_a^{mn})$ above, and the explicit form of $\phi_\mathbf{2}$ is
\begin{equation}
\begin{aligned}
    &\phi_\mathbf{2}(e_4)=-\phi_\mathbf{2}(e_1)=Z\\
    &\phi_\mathbf{2}(e_3)=-\phi_\mathbf{2}(e_2)=\bo\\
&\phi_\mathbf{2}(e_6)=-\phi_\mathbf{2}(e_5)=\begin{pmatrix}
            0 & 1\\
            0 & 0
        \end{pmatrix}\\
&\phi_\mathbf{2}(e_7)=\phi_\mathbf{2}(e_8)=\begin{pmatrix}
            0 & 0\\
            -1 & 0
        \end{pmatrix}.
\end{aligned}
\end{equation}

\subsection{Representations of $\mA^*$}

In this section, we provide details of the representations of $\mA^*$ as a non-unital and non-semisimple algebra. We say that a module is \textit{indecomposable} if it is non-zero and cannot be written as a direct sum of two non-zero submodules. Being a non-semisimple algebra, $\mA^*$ has reducible but indecomposable modules. 

To study the representation of $\mA^*$, we first unitize $\mA^*$ to $\mA^*_+:=\mA^*\oplus\mathbb{C}$ by adding an identity element $e^0=(0,1)$ where $0\in\mA^*$, with the multiplication rule,
\begin{equation}
    e^0\cdot e^I = e^I \cdot e^0 = e^I,\quad\forall I
\end{equation}
where we redefine as $e^I\equiv(e^I,0)$. 
As an algebra, $\mA^*_+$ is unital but still non-semisimple, and with 9 basis elements. We can study the representations of $\mA^*$ by studying that of $\mA^*_+$ due to the following proposition,

\begin{prop}
    Let $\mA$ be a non-unital associative finite-dimensional algebra and $\mA_+$ its unitization. The representations of $\mA$ are in bijection with that of $\mA_+$. In particular, the irreducible representations of $\mA$ are in bijection with the irreducible representations of $\mA_+$.
\end{prop}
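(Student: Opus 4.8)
The plan is to prove the proposition by constructing mutually inverse restriction and extension maps directly at the level of modules, and then to observe that these maps preserve the lattice of submodules, which gives the statement about irreducible representations for free.

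First I would fix notation. A representation of $\mA$ is a vector space $V$ together with an algebra homomorphism $\phi:\mA\to\mathrm{End}(V)$; because $\mA$ has no unit, no unital condition is imposed. A representation of $\mA_+$ is instead a \emph{unital} homomorphism $\psi:\mA_+\to\mathrm{End}(V)$, i.e.\ one satisfying $\psi(e^0)=\bo_V$ for the adjoined unit $e^0$. The restriction map sends $\psi$ to $\phi:=\psi|_{\mA}$, which is automatically an algebra homomorphism on $\mA$. In the opposite direction, the extension map sends $\phi$ to the map $\psi$ defined by $\psi(a+\lambda e^0):=\phi(a)+\lambda\,\bo_V$.

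The one computational step is to check that extension produces a unital representation of $\mA_+$. Using the product $(a+\lambda e^0)(b+\mu e^0)=ab+\lambda b+\mu a+\lambda\mu\, e^0$ in $\mA_+$ and expanding $\psi$ on both sides, the verification collapses to the single identity $\phi(a)\phi(b)=\phi(ab)$, which holds precisely because $\phi$ is a homomorphism of $\mA$; and $\psi(e^0)=\bo_V$ holds by construction. I would then confirm that restriction and extension are mutually inverse: composing in one order is immediate, and in the other order one uses exactly the unital condition $\psi(e^0)=\bo_V$ to recover the adjoined component. This gives a bijection between representations on any fixed $V$; in fact it upgrades to an isomorphism of module categories, since a linear map $T$ intertwines $\phi,\phi'$ if and only if it intertwines their extensions $\psi,\psi'$ (the $e^0$-component contributes the same term $\lambda T$ on both sides and cancels). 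Note that finite-dimensionality plays no role in this part.

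Finally, to pass to irreducibles I would observe that a subspace $W\subseteq V$ is $\mA_+$-invariant if and only if it is $\mA$-invariant: the extra generator $e^0$ acts as $\bo_V$, which fixes every subspace and therefore imposes no additional constraint. Hence the submodule lattices of $V$ viewed as an $\mA$-module and as an $\mA_+$-module coincide identically, so $V$ has a proper nonzero submodule in one picture exactly when it does in the other, and the bijection restricts to one between irreducible representations. I expect no genuine obstacle; the only point meriting a line of care is the bookkeeping for degenerate modules---for instance the module on which $\mA$ acts by zero, which corresponds under extension to the representation where only $e^0$ acts nontrivially---but these are handled uniformly by the same construction, since the unital condition pins down the action of $e^0$ exactly.
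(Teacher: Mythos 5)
Your proposal is correct. The paper states this proposition without providing a proof, so there is nothing to compare against; your argument --- restriction and extension via $\psi(a+\lambda e^0)=\phi(a)+\lambda\,\bo_V$, the multiplicativity check reducing to $\phi(a)\phi(b)=\phi(ab)$, and the observation that $e^0$ acting as the identity leaves the submodule lattice unchanged --- is the standard one the authors implicitly rely on. The one point worth being explicit about is the convention you flag at the end: the bijection on irreducibles holds only because a ``representation of $\mA$'' here carries no non-degeneracy requirement, so the one-dimensional module on which $\mA$ acts by zero counts as irreducible and matches the trivial simple module $S_0$ of $\mA_+$ appearing in the paper's Table~\ref{tbl:modules}; under the alternative convention requiring $\mA V=V$ the count would be off by one. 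You handle this correctly, so there is no gap.
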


For semisimple algebra, all irreducible representations (simple modules) can be obtained by decomposing its left regular representation (due to Artin-Wedderburn theorem). Similarly, one can still obtain all irreducible representations of $\mA^*_+$ from its left regular representation, stated as follows~\cite{assem2006elements}.

\begin{prop}
Let $\mA$ be a finite-dimensional unital associatve algebra over $\mathbb{C}$ and $m_{\mathrm{reg}}$ be its left regular module. Then,
\begin{equation}
m_{\mathrm{reg}}\cong\bigoplus_{a\in \mathrm{Irr}(\mA)} P_a^{\oplus n_a}
\end{equation}
where $P_a$ is an projective indecomposable module such that $P_a/\mathrm{rad}(P_a)=S_a$ where $S_a$ is the simple module labeled by $a$; $P_a^{\oplus n_a}$ means the direct sum of $n_a$ copies of $P_a$, and $\cong$ denotes up to a similarity transformation.
\end{prop}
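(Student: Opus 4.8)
The plan is to recognize this as the standard structure theorem for finite-dimensional algebras and to assemble it from two classical ingredients: the Krull--Schmidt theorem and the theory of projective covers over Artinian rings. The starting observation is that, since $\mA$ is finite-dimensional over $\mathbb{C}$, it is Artinian, so the regular module $m_{\reg}=\mA$ (acting on itself by left multiplication) decomposes into a finite direct sum of indecomposable submodules, and by Krull--Schmidt this decomposition is unique up to isomorphism and reordering of summands. What remains is to show that the indecomposable summands are projective and that grouping them into isomorphism classes reproduces exactly the set $\mathrm{Irr}(\mA)$.

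First I would exhibit the decomposition concretely via idempotents. Choose a complete set $\{e_i\}$ of orthogonal primitive idempotents with $1=\sum_i e_i$; these exist because in the Artinian setting any idempotent decomposition can be refined until every $e_i$ is primitive. This yields $\mA=\bigoplus_i \mA e_i$ as left modules, where each $\mA e_i$ is indecomposable precisely because $e_i$ is primitive (equivalently, because $e_i\mA e_i$ is local). Since $\mA$ is free, hence projective, as a left module over itself, and since any direct summand of a projective module is projective, each $\mA e_i=:P$ is an indecomposable projective module. Collecting isomorphic summands then gives $m_{\reg}\cong\bigoplus_a P_a^{\oplus n_a}$ with the $P_a$ indecomposable projective.

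The next step is to identify the tops $P_a/\mathrm{rad}(P_a)$ with simple modules and to establish the bijection with $\mathrm{Irr}(\mA)$. Here I would use that $\mathrm{rad}(P)=J\cdot P$, where $J=\mathrm{rad}(\mA)$ is the Jacobson radical, so that $P/\mathrm{rad}(P)$ is naturally a module over the semisimple quotient $\bar{\mA}=\mA/J$. For $P=\mA e$ with $e$ primitive one has $P/JP\cong \bar{\mA}\bar e$, and primitivity of $e$ forces this top to be simple; conversely, every simple module $S$ admits a projective cover, which is an indecomposable projective whose top is $S$. This furnishes a bijection between isomorphism classes of indecomposable projectives and of simple modules $S_a\in\mathrm{Irr}(\mA)$, justifying both the indexing by $a$ and the relation $P_a/\mathrm{rad}(P_a)\cong S_a$.

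The main obstacle is not any single hard calculation but the clean assembly of the correct classical inputs: the existence and primitive refinement of orthogonal idempotents in an Artinian algebra, the identification $\mathrm{rad}(\mA e)=J\mA e$, and the lifting of idempotents modulo $J$ that underlies the projective-cover correspondence. All of these are standard (see, e.g., \cite{assem2006elements}), so the proof is essentially a matter of citing and organizing them; the only genuine care required is to match the multiplicities $n_a$ and the labeling by simple modules consistently through the top functor $P\mapsto P/\mathrm{rad}(P)$.
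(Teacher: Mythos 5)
Your proposal is correct: the paper does not prove this proposition but simply cites the standard reference \cite{assem2006elements}, and your argument is precisely the classical one from that source --- refine $1$ into orthogonal primitive idempotents to get $\mA=\bigoplus_i \mA e_i$, observe that summands of a free module are projective and indecomposable exactly when $e_i$ is primitive, apply Krull--Schmidt, and use $\mathrm{rad}(\mA e)=J\mA e$ together with idempotent lifting to set up the bijection between indecomposable projectives and simples via the top functor. The only point worth making explicit is that every $S_a\in\mathrm{Irr}(\mA)$ actually occurs among the tops (e.g.\ because every simple module is a quotient of $m_{\reg}$, hence of some $\mA e_i$, whose top is simple), which justifies indexing the sum by all of $\mathrm{Irr}(\mA)$; your projective-cover remark covers this implicitly.
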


A module is \textit{projective} if it is a direct summand of a free module. Since the left regular module is a free module, by definition $P_a$'s in the above decomposition are projective modules. 
Given a simple module $S$, its projective cover is unique.

In practice, projective indecomposable modules can be found by identifying primitive idempotents using the following proposition,

\begin{prop}
Let $\mA$ be a finite-dimensional unital associatve algebra over field $\mathbb{C}$ and $E\in \mA$ an idempotent. Then $\mA E$ is an indecomposable left $\mA$-module if and only if $E$ is a primitive idempotent.
\end{prop}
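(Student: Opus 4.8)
The plan is to prove the biconditional by setting up an explicit correspondence between orthogonal idempotent decompositions of $E$ inside $\mA$ and direct-sum decompositions of the left module $\mA E$. The single elementary fact that drives both directions is that $E$ acts as a right unit on $\mA E$: every $y\in\mA E$ has the form $y=aE$, so $yE=aE^2=aE=y$. I would record this identity at the very start, since it is the bridge between the module structure and multiplication in $\mA$. One could alternatively phrase the whole argument through the ring isomorphism $\operatorname{End}_{\mA}(\mA E)\cong(E\mA E)^{\mathrm{op}}$, under which idempotent endomorphisms (equivalently, direct summands of $\mA E$) match idempotents of $E\mA E$ (equivalently, orthogonal splittings of $E$); the direct argument below simply makes that correspondence concrete without invoking the endomorphism ring.

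For the direction ``$E$ not primitive $\Rightarrow$ $\mA E$ decomposable,'' I would start from a splitting $E=E_1+E_2$ with $E_1,E_2$ nonzero orthogonal idempotents, i.e. $E_i^2=E_i$ and $E_1E_2=E_2E_1=0$. First note $E_iE=E_i(E_1+E_2)=E_i$, so $\mA E_i=\mA E_iE\subseteq\mA E$ and each $\mA E_i$ is a left submodule. The decomposition $aE=aE_1+aE_2$ gives $\mA E=\mA E_1+\mA E_2$, and this sum is direct: if $x=aE_1=bE_2$, then on one hand $xE_1=aE_1^2=aE_1=x$, while on the other $xE_1=bE_2E_1=0$, forcing $x=0$. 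Since $E_i\in\mA E_i$ with $E_i\neq0$, both summands are nonzero, so $\mA E$ is decomposable.

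For the converse I would argue contrapositively: assume $\mA E=M_1\oplus M_2$ with both $M_i\neq 0$, and write $E=E_1+E_2$ with $E_i\in M_i$ (unique by directness). The crucial input is that each $M_i$ is closed under left multiplication by $\mA$, so $E_1^2=E_1\cdot E_1\in M_1$ whereas $E_1E_2\in M_2$. Comparing with the right-unit identity $E_1=E_1E=E_1^2+E_1E_2$ and using uniqueness of the $M_1\oplus M_2$ decomposition then forces $E_1^2=E_1$ and $E_1E_2=0$, and symmetrically $E_2^2=E_2$, $E_2E_1=0$. Non-vanishing of the $E_i$ follows separately, since $E_1=0$ would give $\mA E=\mA E_2\subseteq M_2$, contradicting $M_1\neq0$. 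Hence $E=E_1+E_2$ exhibits $E$ as non-primitive.

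I expect the converse to be the delicate step. Everything in the first direction is routine once the right-unit identity is in hand, but in the converse the nontrivial observation is that left-submodule closure together with uniqueness of the direct sum is exactly what promotes the merely additive splitting $E=E_1+E_2$ into an \emph{orthogonal idempotent} splitting; one must also take care to rule out the degenerate cases $E_i=0$, which is where the hypothesis $M_i\neq0$ is genuinely used.
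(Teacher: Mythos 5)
Your proof is correct. Note that the paper does not actually prove this proposition --- it is quoted as a standard fact from the representation theory of finite-dimensional algebras (the surrounding propositions are attributed to the textbook literature), so there is no in-paper argument to compare against. What you give is the standard textbook argument: the right-unit identity $yE=y$ for $y\in\mA E$ is exactly the right bridge, the forward direction via orthogonal idempotents is routine, and in the converse you correctly use left-submodule closure plus uniqueness of the $M_1\oplus M_2$ decomposition of $E_1=E_1^2+E_1E_2$ to extract orthogonality and idempotency, and you correctly rule out $E_i=0$ using $M_i\neq 0$. Your parenthetical remark that the whole statement is the specialization of the correspondence between direct summands of $\mA E$ and idempotents of $E\mA E\cong\operatorname{End}_{\mA}(\mA E)^{\mathrm{op}}$ is also the cleanest way to see why the result holds; either formulation would serve the paper's purpose, which is only to justify finding the projective indecomposables $P_a=\mA^*_+E_a$ from a primitive decomposition of the unit.
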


Since $\mA^*_+$ is unital, the unit $e^0$ is always an idempotent and $\mA^*_+ e^0$ is the regular module $m_\reg$. By decomposing $e^0$ into summation of primitive idempotents $e^0=\sum_a E_a$, one can obtain all indecomposable modules that are submodules of $m_\reg$.

We now apply the above general results to our algebra $\mA^*_+$. Specifically, the primitive idempotents are 
\begin{equation}
\begin{aligned}
   & E_1 = e^3 = e^0_{22} \\
   & E_2 = e^5 = e^1_{11} \\ 
   & E_3 = e^8 = e^1_{22} \\
   & E_0 = e^0 - (E_1+E_2+E_3).
\end{aligned}
\end{equation}
Each $E_a$ defines a vector space $V_a:=(\mA^*_+) E_a$. Explicitly, $V_1=\mathrm{span}(e^1,e^3)$, $V_2=\mathrm{span}(e^5,e^7)$, $V_3=\mathrm{span}(e^6,e^8)$, $V_0=\mathrm{span}(e^2,e^4,e^0-e^3-e^5-e^8)$.
The left regular module is then decomposed into a direct sum of projective indecomposable modules
\begin{equation}
   m_\reg \cong P_1 \oplus P_2 \oplus P_3 \oplus P_0
\end{equation}
with properties summarized in Table.~\ref{tbl:modules}, and $P_2\cong P_3$. Explicitly, the corresponding irreducible representation $\psi_{P_a}$ are
\begin{equation}
\begin{aligned}
    &\psi_{P_1}(e^1)=\begin{pmatrix}
        0 & 1\\
        0 & 0
    \end{pmatrix},\quad 
    \psi_{P_1}(e^3)=\begin{pmatrix}
        0 & 0\\
        0 & 1
    \end{pmatrix}\\
    &\psi_{P_1}(e^0)=\bo_2,
\end{aligned}
\end{equation}
and
\begin{equation}
\begin{aligned}
    &\psi_{P_2}(e^5) = \begin{pmatrix}
        1 & 0\\
        0 & 0
    \end{pmatrix},\quad \psi_{P_2}(e^6) = \begin{pmatrix}
        0 & 1\\
        0 & 0
    \end{pmatrix}\\
    &\psi_{P_2}(e^7) = \begin{pmatrix}
        0 & 0\\
        1 & 0
    \end{pmatrix},\quad \psi_{P_2}(e^8) = \begin{pmatrix}
        0 & 0\\
        0 & 1
    \end{pmatrix},\\
    &\psi_{P_2}(e^0)=\bo_2,
\end{aligned}
\end{equation}
and $\psi_{P_3}=\psi_{P_2}$, and
\begin{equation}
\begin{aligned}
    &\psi_{P_0}(e^1) = \begin{pmatrix}
       0& 1 & 0\\
       0& 0 & 0\\
       0 & 0 & 0
    \end{pmatrix},\quad \psi_{P_0}(e^2) = \begin{pmatrix}
        0 &0 & 1\\
        0 & 0& 0\\
        0 & 0 & 0
    \end{pmatrix}\\
    &\psi_{P_0}(e^3) = \begin{pmatrix}
        0 & 0 & 0\\
        0 & 1 & 0\\
        0 & 0 & 0
    \end{pmatrix},\quad \psi_{P_0}(e^4) = \begin{pmatrix}
       0 & 0 & 0\\
       0& 0 & 1\\
       0 & 0 & 0
    \end{pmatrix},\\
    &\psi_{P_0}(e^0)=\bo_3.
\end{aligned}
\end{equation}

To obtain the simple modules, we need to compute the quotient $S_a=P_a/\mathrm{rad}(P_a)$. 
The radical of the module $P_0$ is $\text{span}(e^2,e^4)$, and the radical of the module $P_1$ is $\text{span}(e^1)$. 
After quotienting out the radical, we obtain three simple modules $S_0,S_1,S_2$. Explicitly, the corresponding irreducible representations $\psi_{S_a}$ are as follows. $S_0$ corresponds to a trivial 1d representation that maps everything to 0 except $e^0$,
\begin{equation}
    \psi_{S_0}(e^0) = 1.
\end{equation}
$S_1$ is a nontrivial 1d representation,
\begin{equation}
    \psi_{S_1}(e^0) = 1,\quad \psi_{S_1}(e^3) = 1,
\end{equation}
that together with the representation $\phi_\mathbf{1}$ of $\mA$, constructs an MPO tensor by eq.~\eqref{eqn:Ma-def} and this MPO tensor generates the many-body identity matrix. $\psi_{S_2}=\psi_{P_2}$ is a 2d representation, that together with the representation $\phi_\mathbf{1}$ of $\mA$, constructs an MPO tensor by eq.~\eqref{eqn:Ma-def} and this MPO tensor generates $U_{CZY}$. 

\section{Introduction to $C^*$-weak Hopf algebras}
\label{app:intro-algebra}


In the literature, the theory for MPO algebras as well as the construction of MPDO renormalization fixed points are based on weak Hopf algebras or their representation categories. 
In this appendix, we introduce the notions of pre-bialgebra and $C^*$-weak Hopf algebra~\cite{bohm_coassociativec_1996,bohm_weak_1999,bohm_weak_2000}, and then present the associated construction of RFP tensors. 

\begin{defn}[Pre-bialgebra]
    A pre-bialgebra $\mA$ is an associative algebra together with a linear map $\Delta:\mA\ra \mA\times \mA$ called coproduct, which is associative
    \begin{equation}
    \label{eqn:coproduct-associative}
        (\Delta\otimes \Id)\circ\Delta = (\Id\otimes \Delta)\circ\Delta,
    \end{equation}
    and the coproduct $\Delta$ is multiplicative, i.e., for all $x,y\in\mA$,
    \begin{equation}
        \Delta(xy)=\Delta(x)\Delta(y),
    \end{equation}
    where the multiplication on $\mA\otimes\mA$ is defined component-wise, $(x\otimes y)(z\otimes w)=(xz\otimes yw)$ for $x,y,z,w\in \mathcal{A}$.

    The pre-bialgebra is called unital if $\mA$ is unital; and co-unital if there is a linear functional $1_{\mA^*}: \mA\ra \mathbb{C}$ called counit such that
    \begin{equation}
        (\cu\otimes\Id)\circ\Delta=(\Id\otimes\cu)\circ\Delta=\Id.
    \end{equation}
\end{defn}

We use the terminologies ``product'' and ``multiplication'', ``coproduct'' and ``comultiplication'' interchangeably. 

Given a pre-bialgebra $\mA$ with multiplication $\cdot_\mA$ and coproduct $\Delta_\mA$, its dual vector space $\mA^*=\text{Hom}(\mA,\mathbb{C})$ naturally inherits a pre-bialgebra structure, with multiplication $\cdot_{\mA^*}=\Delta_\mA^T$ and $\Delta_{\mA^*}=(\cdot_\mA)^T$. The unit of $\mA^*$ is the counit of $\mA$, if it exists. Choosing a basis $\basis=\{e_I\}$ of $\mA$, one defines the dual basis $\basis^*=\{e^I\}$ of $\mA^*$ such that $e^J(e_I)=\delta^J_I$. The product and coproduct on $\mA$ can be specify using the algebra basis,
\begin{equation}
    e_I e_J=\sum_K \lambda_{IJ}^K e_K,\quad \Delta(e_I)=\sum_{JK} \Lambda_I^{JK} e_J \otimes e_K,
\end{equation}
which induce the coproduct and product on $\mA^*$,
\begin{equation}
    \Delta(e^I)=\sum_{JK} \lambda^I_{JK} e^J\otimes e^K,\quad e^I e^J=\sum_K \Lambda^{IJ}_K e^K.
\end{equation}
The above formulas lead to
\begin{equation}
\label{eqn:concatenate-two}
    \sum_{IJ} e_I\otimes e_J \otimes e^I e^J = \sum_K \Delta(e_K) \otimes e^K.
\end{equation}
This relation will be useful in the concatenation of MPO tensors. 

\begin{defn}[$C^*$-pre-bialgebra]
    A $C^*$-pre-bialgebra $\mA$ is a unital pre-bialgebra endowed with an anti-linear map $*:\mA\ra \mA$ which is an involution $x^{**}=x$, anti-homomorphism $(xy)^*=y^* x^*$, and cohomomorphism 
    \begin{equation}
    \Delta\circ *=(*\otimes *)\circ\Delta;    
    \end{equation}
    and $\mA$ has a faithful $*$-representation $\phi$, i.e., a faithful representation $\phi:\mA\ra \text{End}(V)$ such that  $\phi(x^*)=\phi(x)^\dg$ for all $x\in\mA$ where $\dg$ denotes the conjugate transpose on $\text{End}(V)$. 
\end{defn}

Note that a finite-dimensional $C^*$-pre-bialgebra as a $C^*$-algebra must be unital because it is a direct sum of matrix algebras. 
A $C^*$-weak Hopf algebra is a counital $C^*$-pre-bialgebra with extra structures, formally defined as follows,

\begin{defn}[$C^*$-weak Hopf algebra]
A $C^*$-weak Hopf algebra $\mA$ is a counital $C^*$-pre-bialgebra such that the counit $\cu\in \mA^*$ is weakly comultiplicative
\[
\begin{aligned}
 \cu(xyz)&=\cu(x y_{(1)})\cu(y_{(2)}z)\\
 &=\cu(x y_{(2)}) \cu(y_{(1)} z),   
\end{aligned}
\]
for all $x,y,z\in\mA$; the unit $1\in\mA$ is weakly comultiplicative
\[
\begin{aligned}
1_{(1)}\otimes 1_{(2)}\otimes 1_{(3)}&=1_{(1)}\otimes 1_{(2)} 1_{(1')}\otimes 1_{(2')}\\
&=1_{(1)}\otimes 1_{(1')} 1_{(2)} \otimes 1_{(2')};    
\end{aligned}
\]
and $\mA$ is endowed with a linear map $S:\mA\ra\mA$  called antipode, which is anti-multiplicative and anti-comultiplicative, and satisfies
\[
\begin{aligned}
    & S(x_{(1)}) x_{(2)}=\cu(1_{(1)}x)1_{(2)}\\
    &x_{(1)}S(x_{(2)})=1_{(1)}\cu(x 1_{(2)}).
\end{aligned}
\]
\end{defn}

The dual $\mA^*$ of a $C^*$-WHA $\mA$ is also a $C^*$-WHA, with the $*$-operation on $\mA^*$ induced from that of $\mA$, and possesses a faithful $*$-representation $\psi$. 
By Tannaka duality, the $*$-representation category $\text{Rep}^*(\mA)$ of a $C^*$-weak Hopf algebra $\mA$ is a unitary multifusion category. 
The existence of counit ensures the existence of a monoidal unit called the trivial representation. A $C^*$-weak Hopf algebra is \textit{biconnected} if both the trivial representation of $\mA$ and $\mA^*$ are irreducible. Let $\mA$ be a biconnected $C^*$-weak Hopf algebra, there exists a special element called ``canonical regular element'' $\omega\in\mA^*$, that takes the form
\begin{equation}
    \omega=\sum_{\vbnt\in\text{Irr}(\mA)} \frac{d_\vbnt}{\mathrm{FPdim}(\mathcal D)} \textbf{x}_\vbnt,\quad \text{with} \quad\textbf{x}_\vbnt=\tr\circ \phi_\vbnt,
\end{equation}
where $\text{Irr}(\mA)$ is the set of all irreducible representations of the algebra $\mA$, $\phi_\vbnt$ is the irreducible representation of $\mA$ labeled by $\vbnt$, $d_\vbnt$ is the quantum dimension of $\phi_\vbnt$, and $\mathrm{FPdim}(\mathcal D):=\sum_\vbnt d_\vbnt^2$. Given a faitful $*$-representation $\phi$, the weight matrix $b(\omega)$ is defined such that $\tr(b(\omega)\phi(x))=\omega(x)$ for all $x\in\mA$. 

We are now ready to state the RFP construction from a $C^*$-weak Hopf algebra~\cite{molnar2022matrix,ruiz2024matrix},

\begin{thm}
Given a biconnected $C^*$-weak Hopf algebra $\mA$, and $\phi$ and $\psi$ being faithful $*$-representations of $\mA$ and $\mA^*$ respectively, the MPDO generated by tensor $M$
\begin{equation}
\label{eqn:WHA-FP}
\begin{array}{c}
        \begin{tikzpicture}[scale=1.,baseline={([yshift=-0.65ex] current bounding box.center)}]
		\draw (-0.75,0) node {$\alpha$};
		\draw (0.75,0) node {$\beta$};
		\draw (0,0.75) node {$i$};
        \draw (0,-0.75) node {$j$};
        \whaM{(0,0)}{2};
        \end{tikzpicture}
        \end{array}=\sum_{I\in\basis} [b(\omega) \phi(e_I)]_{ij} [\psi(e^I)]_{\alpha\beta},
\end{equation}  
and boundary condition matrix $B(x)$ defined by
\begin{equation}
    \tr[B(x)\psi(f)]=f(x),\quad \forall f\in\mA^*
\end{equation}
is an RFP MPDO, for all positive nonzero $x\in\mA$~\cite{ruiz2024matrix}. 
\end{thm}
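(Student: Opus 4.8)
The plan is to verify the two defining properties of an RFP MPDO in turn: that the tensor $M$ of eq.~\eqref{eqn:WHA-FP} together with the boundary $B(x)$ generates a genuine density operator $\rho^{(N)}=(\rho^{(N)})^\dg\geq 0$, and that there exist quantum channels $\mT,\mS$ realizing the coarse-graining move eq.~\eqref{eqn:RFP-def-main}. The second property is a bulk statement about $M$ alone and is independent of the boundary, whereas positivity of the full operator is where the hypothesis $x\geq 0$ enters.

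First I would treat the valid-MPDO requirement. Hermiticity follows from the $*$-structure: since $\phi,\psi$ are faithful $*$-representations we have $\phi(e_I)^\dg=\phi(e_I^*)$ and $\psi((e^I))^\dg=\psi((e^I)^*)$, while the cohomomorphism axiom $\Delta\circ *=(*\otimes *)\circ\Delta$ guarantees that exchanging and conjugating the ket/bra indices of $\rho^{(N)}$ permutes the summation consistently; together with self-adjointness of the weight matrix $b(\omega)$ this returns $\rho^{(N)}$. Self-adjointness and positivity of $b(\omega)$ I would read off its explicit block form in $\phi=\bigoplus_\vbnt\phi_\vbnt$, where $\omega=\sum_\vbnt (d_\vbnt/\mathrm{FPdim}(\mathcal D))\,\tr\circ\phi_\vbnt$ makes $b(\omega)$ diagonal with strictly positive entries $d_\vbnt/\mathrm{FPdim}(\mathcal D)$. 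For positivity of $\rho^{(N)}$ I would use $x\geq 0$, writing $x=y^*y$, together with the positivity of $b(\omega)$, to recast $\rho^{(N)}$ in a manifestly positive (purification / square-root) form built from $\phi$, $\psi$ and $y$.

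For the renormalization move I would construct the fine-graining channel directly from the coproduct. The algebraic engine is eq.~\eqref{eqn:concatenate-two}: applying $\phi\otimes\phi$ to the first two slots and $\psi$ to the third shows that concatenating two $M$ tensors along the virtual bond is governed by $(\phi\otimes\phi)\circ\Delta$ with the same $\psi$ on the now-shared virtual space. The channels $\mT,\mS$ must therefore implement, on the physical legs, the splitting dual to $\Delta$; I would take their Kraus operators from this coproduct data together with the antipode $S$ and unit $1\in\mA$. Trace preservation then reduces to the weak-comultiplicativity axioms of $\cu$ and $1$ and the antipode identity $S(x_{(1)})x_{(2)}=\cu(1_{(1)}x)1_{(2)}$ --- exactly the WHA structure that makes $\cu$ a normalized integral --- while complete positivity is manifest once the Kraus form is written using $\phi(x^*)=\phi(x)^\dg$. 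Finally I would verify eq.~\eqref{eqn:RFP-def-main} by direct contraction: feeding the single-site tensor through $\mT,\mS$, using $\phi(e_I)\phi(e_J)=\phi(e_Ie_J)$, eq.~\eqref{eqn:concatenate-two}, and $\tr(b(\omega)\phi(x))=\omega(x)$, and checking that the weights $d_\vbnt/\mathrm{FPdim}(\mathcal D)$ recombine to reproduce two copies of $M$.

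I expect the main obstacle to be the positivity and CPTP content rather than the algebraic identity. The fixed-point equation is essentially bookkeeping once eq.~\eqref{eqn:concatenate-two} and the multiplicativity of $\phi$ are in hand, but proving that the constructed maps are genuinely completely positive and trace-preserving, and that $\rho^{(N)}\geq 0$, requires the full $C^*$-weak Hopf structure: the involution $*$, the strict positivity of the canonical regular element $\omega$ (guaranteed by biconnectedness, so that $\omega$ is the unique Haar integral), and the antipode/counit axioms. These positivity requirements are precisely what distinguishes the $C^*$-WHA hypothesis from a bare pre-bialgebra, and pinning down the normalization by $\mathrm{FPdim}(\mathcal D)$, so that $\mT,\mS$ are exactly trace-preserving rather than merely trace-rescaling, is the most delicate quantitative step.
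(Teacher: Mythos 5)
Your ingredients are the right ones, but your route is genuinely different from the paper's. The paper does not prove this theorem by exhibiting the channels $\mT,\mS$ at all: it observes that a biconnected $C^*$-weak Hopf algebra automatically satisfies the transitivity and duality hypotheses of Theorem~\ref{prop:sufficient-main}, and then proves that more general statement in app.~\ref{app:proof-suff} by an indirect argument --- bring $M$ into vertical canonical form, invoke the characterization (Theorem~\ref{thm:fusion-main}) that $M$ is an RFP iff there exist isometries $W_{\mathbf{ab}}$ fusing pairs of normal tensors into $\bigoplus_{\mathbf c}\chi_{\mathbf{a,b,c}}\otimes M_{\mathbf c}$ subject to the weight condition $m_{\mathbf c}=\sum_{\mathbf{ab}}\tr[\chi_{\mathbf{a,b,c}}]m_{\vbnt}m_{\mathbf b}$, obtain the $W_{\mathbf{ab}}$ from eq.~\eqref{eqn:concatenate-two} exactly as you do, and then fix the weights $d_\vbnt/\mathrm{FPdim}(\mathcal D)$ by a Perron--Frobenius argument on the fusion matrices $N_\vbnt$. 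Your ``trace preservation of $\mT,\mS$'' and the paper's weight condition are two faces of the same quantitative constraint, and your treatment of positivity via $x=yy^*$ and the block form of $b(\omega)$ coincides with the paper's closing argument. What each approach buys: your direct construction (essentially that of the cited reference) leans on the counit, antipode and Haar-integral property of $\omega$, so it is tied to the full weak Hopf structure; the paper's detour through the vertical canonical form never needs to write down $\mT,\mS$ explicitly, which is precisely what lets it relax the hypotheses to a non-counital, non-cosemisimple pre-bialgebra.

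The one real gap in your plan is that the Kraus operators of $\mT$ and $\mS$ are never specified --- you only assert that they come from the coproduct data together with the antipode and unit. Since the entire burden of your route rests on that construction (complete positivity, trace preservation, and the fixed-point contraction all depend on its explicit form), the proposal as written identifies correctly which axioms must enter where but does not yet constitute a proof. To complete it along your lines you must exhibit these maps and verify the antipode/counit identities you invoke; otherwise the cleaner path is the paper's, via Theorem~\ref{thm:fusion-main}, which replaces the existence of the channels by the existence of the isometries $W_{\mathbf{ab}}$ and the weight condition.
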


The proof of this theorem relies on the structure of a biconnected $C^*$-weak Hopf algebra. We will present a different proof in app.~\ref{app:proof-suff}, from which the algebraic structure can be relaxed to a $C^*$-pre-bialgebra that is not necessarily counital and not necessarily co-semisimple (meaning, $\mA^*$ is not unital and is not semisimple), along with certain conditions. 

\section{Proof of Theorem~\ref{prop:sufficient-main}}
\label{app:proof-suff}

In this appendix, we prove Theorem~\ref{prop:sufficient-main}, which is a sufficient condition for a general construction of an RFP, with input $\mA$ being an associative semisimple $C^*$-pre-bialgebra, possibly lacking a counit and not necessarily cosemisimple (i.e., $\mA^*$ is not necessarily unital and semisimple). 

\subsection{Structure of MPDO renormalization fixed points}
The proof will utilize the structure of the MPDO tensor, in particular, its vertical canonical form. 

\begin{prop}[Vertical canonical form]
For any tensor $A$ generating an MPDO, it is always possible to obtain another tensor $M$ that generates the same MPDO for any $N$, and $M$ after a basis transformation by isometry $U$ is in the \textit{vertical canonical form},
   \begin{equation}
   \label{eqn:vcf-main}
    U M_{(\alpha\beta)} U^\dg=\bigoplus_\vbnt \mu_\vbnt\otimes M_{(\alpha\beta),\vbnt}
\end{equation} 
where $\mu_\vbnt$ are diagonal and positive matrices, and $\{M_\vbnt\}$ form a \textit{basis of normal tensors} (BNT): (i) Each $M_\vbnt$ is a normal tensor, in the sense that the algebra generated by the set $\{M_{(\alpha\beta),\vbnt}\}_{\alpha\beta}$ is a full matrix algebra, and $M_\vbnt$ is normalized; (ii) For $\vbnt\neq \vbnt'$, the normal tensors $M_\vbnt$ and $M_{\vbnt'}$ are independent, in the sense that $M_{\vbnt'}$ cannot be brought to $M_\vbnt$ by $M_{(\alpha\beta)\vbnt'}=e^{i\phi} XM_{(\alpha\beta)\vbnt}X^{-1}$ where $X$ is an invertible matrix and $\phi$ is a phase~\cite{cirac2017matrix}. 
\end{prop}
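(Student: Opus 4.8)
The plan is to reduce the statement to the fundamental theorem of matrix product states (the canonical form of Ref.~\cite{cirac2017matrix}) applied in the vertical direction, and then to use the Hermiticity and positivity of the MPDO to promote the generic gauge freedom to an isometry and to extract the positive weights $\mu_\vbnt$.

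First I would put $A$ into a reduced form. Viewing $\rho^{(N)}$ as a vectorized translation-invariant state with combined physical index $(ij)$ and virtual indices $(\alpha,\beta)$, I would discard the part of the virtual space that never contributes to $\rho^{(N)}$ for any $N$ -- that is, restrict to the support of the fixed points of the transfer operator $\mathbb E=\sum_{ij}M^{ij}\otimes\overline{M^{ij}}$. This produces the tensor $M$ that generates the same MPDO for all $N$ and has no redundant virtual directions, which is the content of the phrase ``another tensor $M$ that generates the same MPDO.''

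Second, I would read the reduced tensor in the transverse (vertical) direction: treating the virtual pair $(\alpha\beta)$ as an effective physical label and the two physical legs $(i,j)$ as the effective left and right bonds, the operators $\{M_{(\alpha\beta)}\}_{\alpha\beta}$ acting on the physical space play the role of the MPS matrices. Applying the existence part of the fundamental theorem yields a block decomposition $\bigoplus_\vbnt$ into normal tensors that are pairwise inequivalent, with no two related by a similarity transform and a phase. Normality of each block is exactly property (i) -- the operators $\{M_{(\alpha\beta),\vbnt}\}_{\alpha\beta}$ generate the full matrix algebra on the block space $\mathcal L_\vbnt$ -- and the inequivalence is property (ii); together they certify that $\{M_\vbnt\}$ is a basis of normal tensors.

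The remaining, and hardest, step is to upgrade the generic invertible gauge to the isometry $U$ and to produce the positive diagonal weights $\mu_\vbnt$. Here I would invoke Hermiticity $\rho^{(N)}=(\rho^{(N)})^\dagger$, which makes the family $\{M_{(\alpha\beta)}\}$ closed under the adjoint up to the virtual gauge, so that the transformation block-diagonalizing the physical space can be chosen unitary; this fixes $U$ through the Wedderburn decomposition of the associated finite-dimensional $C^*$-algebra and simultaneously forces each physical block to factorize as $\mathcal K_\vbnt\otimes\mathcal L_\vbnt$. Positivity $\rho^{(N)}\geq 0$ for all $N$ then constrains the surviving factor on $\mathcal K_\vbnt$ to be a fixed positive operator, which after rotating into its eigenbasis appears as the positive diagonal matrix $\mu_\vbnt$ in eq.~\eqref{eqn:vcf-main}, with $M_\vbnt$ normalized. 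The main obstacle is precisely this positivity input: it is a global condition over all system sizes and is in general hard to localize, so one must argue -- using the translation-invariant fixed-point structure of the transfer operator together with the normalization of each normal block -- that the extracted weight is genuinely positive, $N$-independent, and uniform across the bond labels $\alpha,\beta$.
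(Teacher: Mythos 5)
The paper itself does not prove this proposition: it is imported wholesale from Ref.~\cite{cirac2017matrix}, with the citation embedded in the statement, so there is no in-paper argument to compare your attempt against. Your outline does follow the strategy of that reference --- read the tensor transversally, decompose the action of $\{M_{(\alpha\beta)}\}$ on the physical space into irreducible blocks, and use Hermiticity and positivity of $\rho^{(N)}$ to obtain the isometry $U$ and the positive weights $\mu_\vbnt$ --- so the overall shape is right.

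There is, however, a genuine gap in the order of the logic. In the vertical direction there is no periodic closure and no trace, so the ``existence part of the fundamental theorem'' does not by itself deliver a \emph{direct sum} of normal tensors: for a generic family of matrices $\{M_{(\alpha\beta)}\}$ one only obtains a block \emph{upper-triangular} decomposition, and in the ordinary MPS setting the off-diagonal blocks are discarded only because they drop out of the trace over the chain. Here they do not automatically drop out of $\rho^{(N)}$; eliminating them (which is precisely what produces the ``another tensor $M$ generating the same MPDO'' in the statement) is itself a consequence of Hermiticity and positivity, not of the canonical-form machinery. Concretely, Hermiticity for all $N$ together with the \emph{horizontal} fundamental theorem shows that $\mathrm{span}\{M_{(\alpha\beta)}\}$ is closed under the adjoint up to a virtual gauge, so the algebra it generates is similar to a $*$-algebra and hence acts semisimply on its support; only then does the Wedderburn decomposition give the isotypic splitting $\bigoplus_\vbnt \mathcal K_\vbnt\otimes\mathcal L_\vbnt$ via an isometry, with $\mu_\vbnt$ collecting the weights with which the copies of each normalized normal tensor occur, and positivity forcing those weights to be strictly positive. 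Two consequences for your write-up: (a) your step 1 (restricting to the support of the fixed points of the horizontal transfer operator) is not the step that produces the modified tensor --- the modification is a vertical one, justified by positivity, and your step 1 is in any case too weak, since the adjoint-closure argument needs the full horizontal canonical form rather than a mere support restriction; and (b) your steps 2 and 3 cannot be separated as stated, because the Hermiticity and positivity inputs you defer to step 3 are already required to assert the direct-sum BNT structure you claim at the end of step 2.
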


All operations in eq.~\eqref{eqn:vcf-main} are in the vertical direction. The tensor $M_\vbnt$ is normalized such that the transfer matrix $\sum_{\alpha\beta}M_{(\alpha\beta)\vbnt}\otimes \bar{M}_{(\alpha\beta)\vbnt}$ has spectral radius 1 where $\bar{M}_\vbnt$ denotes the complex conjugate of $M_\vbnt$. The vertical canonical form plays a central role in verifying RFP condition (eq.~\eqref{eqn:RFP-def-main}), as stated below. 
\begin{thm}
\label{thm:fusion-main}
    A tensor $M$ generating an MPDO is an RFP iff there exists isometries $W_{\mathbf{ab}}$ (satisfying $W_{\mathbf{ab}}(W_{\mathbf{ab}})^\dg = \bo$), such that 
    \begin{equation}
\begin{array}{c}
        \begin{tikzpicture}[scale=1,baseline={([yshift=-0.75ex] current bounding box.center)}
        ]
        \def\rd{0.247};
        \def\rlen{0.318};
        \def\wth{0.25};
        \whaMorg{(0,0)}{2}{$M_\vbnt$}{whampdocolor};
        \whaMorg{(\whadx,0)}{2}{$M_{\mathbf{b}}$}{whampdocolor};
        \vtTensor{(\whadx*0.5, {\wth+\rd+\rlen})}{\whadx*0.7}{\wth}{\whadx*0.5}{{\wth+\rlen}}{\small $W_{\mathbf{ab}}$};
         \vTensor{(\whadx*0.5, -0.815)}{\whadx*0.7}{0.25}{\whadx*0.5}{0.568}{\small $W^\dg_{\mathbf{ab}}$};
        \end{tikzpicture}
        \end{array}=\bigoplus_{\mathbf{c}} \chi_{\mathbf{a,b,c}} \otimes
\begin{array}{c}
        \begin{tikzpicture}[scale=1.,baseline={([yshift=-0.75ex] current bounding box.center)}
        ]
       \whaMorg{(0,0)}{2}{$M_{\mathbf{c}}$}{whampdocolor};
        \end{tikzpicture}
        \end{array}
\end{equation}
where each $\chi_{\mathbf{a,b,c}}$ is a diagonal matrix with positive diagonal elements, and $m_{\mathbf{c}}=\sum_{\mathbf{ab}} \tr[\chi_{\mathbf{a,b,c}}] m_\vbnt m_\mathbf{b}$ where $m_\vbnt:=\tr[\mu_\vbnt]$.  
\end{thm}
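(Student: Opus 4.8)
The plan is to prove both directions by reducing the renormalization fixed-point condition of eq.~\eqref{eqn:RFP-def-main} to a statement about the basis of normal tensors $\{M_\vbnt\}$ appearing in the vertical canonical form of eq.~\eqref{eqn:vcf-main}. The essential tool is the fundamental theorem for normal (injective) tensors~\cite{cirac2017matrix}: two normal tensors generate the same operator for all $N$ if and only if they are related by a gauge transformation on the bond space, unique up to a scalar. First I would substitute the vertical canonical form into both sides of the diagram and express the two-site blocked object $M \boxtimes M$ as $\bigoplus_{\mathbf{a},\mathbf{b}} \mu_\vbnt \otimes \mu_\mathbf{b} \otimes (M_\vbnt \boxtimes M_\mathbf{b})$, where $\boxtimes$ denotes composition in the horizontal (bond) direction. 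Since the composite tensors $M_\vbnt \boxtimes M_\mathbf{b}$ are generically not normal, the problem reduces to decomposing them into the normal basis $\{M_\mathbf{c}\}$.

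For the backward direction ($\Leftarrow$), I would assume the fusion isometries $W_{\mathbf{ab}}$ and the positive diagonal matrices $\chi_{\mathbf{a,b,c}}$, and construct explicit quantum channels $\mathcal{T}$ and $\mathcal{S}$ whose Kraus operators are built from the $W_{\mathbf{ab}}$ through a Stinespring-type dilation, summed over the sectors $\mathbf{a},\mathbf{b}$ with weights fixed by $\mu_\vbnt,\mu_\mathbf{b}$ and $\chi_{\mathbf{a,b,c}}$. Substituting these channels into eq.~\eqref{eqn:RFP-def-main} and applying the fusion relation reproduces the blocked tensor directly; the isometry property $W_{\mathbf{ab}} W_{\mathbf{ab}}^\dg = \bo$ supplies complete positivity, while the normalization $m_\mathbf{c} = \sum_{\mathbf{ab}} \tr[\chi_{\mathbf{a,b,c}}] m_\vbnt m_\mathbf{b}$ together with the spectral-radius-one normalization of the transfer matrices supplies trace preservation.

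For the forward direction ($\Rightarrow$), I would begin from the existence of $\mathcal{T}$ and $\mathcal{S}$ and apply the fundamental theorem to the identity they satisfy. Bringing the channel-transformed single tensor to vertical canonical form and matching normal-tensor sectors forces the physical-leg fusion of each pair $M_\vbnt \boxtimes M_\mathbf{b}$ to decompose as $\bigoplus_\mathbf{c} \chi_{\mathbf{a,b,c}} \otimes M_\mathbf{c}$ for some intertwiner $W_{\mathbf{ab}}$. I would then argue that trace preservation of $\mathcal{S}$, equivalently the positivity and Hermiticity $\rho^{(N)}=(\rho^{(N)})^\dg \geq 0$ of the MPDO with $\mu_\vbnt>0$, forces $W_{\mathbf{ab}}$ to be an isometry and $\chi_{\mathbf{a,b,c}}$ to be positive and diagonal, and I would extract the trace identity by equating the total sector weights $m_\mathbf{c}=\tr[\mu_\mathbf{c}]$ on both sides.

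I expect the main obstacle to be the forward direction, specifically showing that the a priori physical-space freedom in the channels $\mathcal{T},\mathcal{S}$ collapses to genuine isometries $W_{\mathbf{ab}}$ acting only on the bond and multiplicity spaces, rather than to some more general completely positive map. Establishing this requires combining the uniqueness clause of the fundamental theorem for normal tensors with the constraints that $\mathcal{T},\mathcal{S}$ be completely positive and trace preserving, and carefully tracking how the positivity of the weights $\mu_\vbnt$ propagates into the positivity and diagonality of each $\chi_{\mathbf{a,b,c}}$.
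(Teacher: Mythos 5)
First, a point of orientation: the paper does not actually prove Theorem~\ref{thm:fusion-main}. It is stated in app.~\ref{app:proof-suff} as an imported structural result on MPDO renormalization fixed points (in the spirit of the cited literature on the vertical canonical form), and only its ``if'' direction is invoked in the paper's proof of Theorem~\ref{prop:sufficient-main}. So there is no in-paper argument to compare yours against; your proposal has to stand on its own, and I will assess it that way.

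Your outline identifies the right tools (the vertical canonical form of eq.~\eqref{eqn:vcf-main} and the fundamental theorem for normal tensors), and your ``only if'' direction is where the real content lies --- but that is precisely where your plan has a genuine gap, which you acknowledge without resolving. Your strategy is to extract the isometries $W_{\mathbf{ab}}$ from the Kraus decompositions of the channels $\mathcal{T},\mathcal{S}$ appearing in eq.~\eqref{eqn:RFP-def-main}. This is the wrong place to look for them: $\mathcal{T}$ and $\mathcal{S}$ act on the full physical space, mixing all sectors and carrying the weights $\mu_{\mathbf{a}}$, and nothing forces their Kraus operators to respect the block structure or to act by a single isometry conjugation within each $(\mathbf{a},\mathbf{b})$ block. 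The natural source of the isometry and of the \emph{positive diagonal} $\chi_{\mathbf{a,b,c}}$ is instead the vertical-canonical-form machinery applied to the two-site blocking of $M$ (which generates the same Hermitian, positive semidefinite MPDO for every $N$); the role of the RFP channels and of the fundamental theorem is then only to force the resulting basis of normal tensors to coincide, up to gauge, with the original $\{M_{\mathbf{c}}\}$, and to fix the weight identity $m_{\mathbf{c}}=\sum_{\mathbf{ab}}\tr[\chi_{\mathbf{a,b,c}}]\,m_{\mathbf{a}}m_{\mathbf{b}}$ by comparing canonical-form weights of the blocked and unblocked tensors. Relatedly, your closing remark that the $W_{\mathbf{ab}}$ act ``only on the bond and multiplicity spaces'' misreads the statement: they act on the two-site \emph{physical} space, fusing it into $\bigoplus_{\mathbf{c}}\mathbb{C}^{n_{\mathbf{abc}}}\otimes H_{\mathbf{c}}$, exactly as in eq.~\eqref{eqn:Mab}. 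For the ``if'' direction your Stinespring sketch is workable, but you should make explicit that the splitting channel $\mathcal{T}$ requires Kraus operators of the form $W_{\mathbf{ab}}^{\dg}(\chi_{\mathbf{a,b,c}}^{-1/2}\otimes\cdot)$, whose existence uses the strict positivity of the diagonal entries of $\chi_{\mathbf{a,b,c}}$, and that trace preservation of both channels is precisely the stated weight identity.
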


The construction of the fixed-point tensor $M$ with input $\mA$ amounts to constructing tensors $\{M_\vbnt\}$ that verify Theorem~\ref{thm:fusion-main}. We will choose $M_\vbnt$ being proportional to $\tilde{M}_\vbnt$; recall that the explicit form of $\tilde{M}_\vbnt$ is
\begin{equation}
    \label{eqn:explicit-Mabar}(\tilde{M}_\vbnt)^{ij}_{\alpha\beta}=\sum_{I\in\basis}[\phi_\vbnt(e_I)]_{ij} [\psi(e^I)]_{\alpha\beta}.
\end{equation}

\subsection{Proof}

To begin, we note that associativity of the coproduct eq.~\eqref{eqn:coproduct-associative} imposes that the fusion multiplicities $N_{\mathbf{ab}}^\mathbf{c}$ are \textit{associative}, i.e.,
\begin{equation}
    \sum_{\mathbf{e}\in\text{Irr}(\mA)} N_{\mathbf{ab}}^\mathbf{e} N_{\mathbf{ec}}^\mathbf{d} = \sum_{\mathbf{f}\in\text{Irr}(\mA)}N_{\mathbf{af}}^\mathbf{d} N_{\mathbf{bc}}^\mathbf{f} .
\end{equation}
Equivalently, define the left multiplication matrices $N_\vbnt$ with components $(N_\vbnt)_{\mathbf{cb}}=N_{\mathbf{ab}}^\mathbf{c}$. 
Then, the associative condition reads
\begin{equation}
\label{eqn:N-algebra}
    N_\vbnt N_\mathbf{b} = \sum_{\mathbf{c}\in\text{Irr}(\mA)} N^\mathbf{c}_{\mathbf{ab}} N_\mathbf{c}.
\end{equation}
One can also define the right multiplication matrices $\tilde{N}_\vbnt$ via $(\tilde{N}_\mathbf{b})_{\mathbf{ca}}=N_{\mathbf{ab}}^\mathbf{c}$, and the associative condition implies the commutation relation
\begin{equation}
    \tilde{N}_\mathbf{c} N_\mathbf{a} = N_\mathbf{a} \tilde{N}_\mathbf{c}.
\end{equation}
We say that $N_{\mathbf{bc}}^\mathbf{a}$ are \textit{transitive} if for any $\mathbf{a,b}\in\text{Irr}(\mA)$, there exsits $\mathbf{c,d}\in\text{Irr}(\mA)$ such that $N_{\mathbf{ac}}^\mathbf{d}>0, N_{\mathbf{da}}^\mathbf{b}>0$.

Before presenting the proof, we point out that the construction is essentially the same as in~\cite{ruiz2024matrix}. The nontrivial aspect is the relaxation of structural assumptions on $\mA$, as the proof in~\cite{ruiz2024matrix} relies on $\mA$ being a biconnected $C^*$-weak Hopf algebra. We note that $\mA$ being a biconnected $C^*$-weak Hopf algebra is sufficient to fulfill (1) and (2) in Theorem~\ref{prop:sufficient-main}, but not necessary.

\begin{proof}[Proof of Theorem~\ref{prop:sufficient-main}]

Consider the tensors $\{\tilde{M}_\vbnt\}$ in eq.~\eqref{eqn:explicit-Mabar} and denote the normalized $\tilde{M}_\vbnt$ as $M_\vbnt$ with $M_\vbnt=C_\vbnt \tilde{M}_\vbnt$, where the constant $C_\vbnt$ is fixed by requiring $M_\vbnt$ to be normalized. By construction, $M_\vbnt$ is a (vertical) normal tensor because $\psi$ is a faithful representation of $\mA^*$ and the density theorem of $\phi_\vbnt$ states that $\text{alg}(\{\phi_\vbnt(e_I)\}_I)=\mathcal{M}_{D_\vbnt}$ where $D_\vbnt$ is the dimension of $\phi_\vbnt$. The tensors formed by different irreps $\vbnt$ are independent normal tensors; therefore, $\{M_\vbnt\}$ form a basis of normal tensors. 

Concatenating the tensors $\tilde{M}_\vbnt$ and $\tilde{M}_{\mathbf{b}}$ and using eq.~\eqref{eqn:concatenate-two}, 
\begin{equation}
\label{eqn:Mab}
\begin{aligned}
    \begin{array}{c}
        \begin{tikzpicture}[scale=1,baseline={([yshift=-0.75ex] current bounding box.center)}
        ]
        \def\rd{0.247};
        \def\rlen{0.318};
        \def\wth{0.25};
        \whaMorg{(0,0)}{2}{$\tilde{M}_\vbnt$}{white};
        \whaMorg{(\whadx,0)}{2}{$\tilde{M}_\mathbf{b}$}{white};
        \end{tikzpicture}
        \end{array}&=\sum_{IJ}\phi_\vbnt(e_I)\otimes\phi_{\mathbf{b}}(e_J)\otimes \psi(e^I e^J)\\
        &=\sum_I \left[(\phi_\vbnt\otimes\phi_{\mathbf{b}})\circ\Delta(e_I)\right]\otimes\psi(e^I)\\
        &\simeq \sum_I \left[\bigoplus_{\mathbf{c}} \bo_{N_{\mathbf{ab}}^{\mathbf{c}}}\otimes \phi_{\mathbf{c}}(e_I)\right]\otimes \psi(e^I)\\
        &=\bigoplus_{\mathbf{c}} \bo_{N_{\mathbf{ab}}^\mathbf{c}} \otimes
\begin{array}{c}
        \begin{tikzpicture}[scale=1.,baseline={([yshift=-0.75ex] current bounding box.center)}
        ]
       \whaMorg{(0,0)}{2}{$\tilde{M}_\mathbf{c}$}{white};
        \end{tikzpicture}
        \end{array},
\end{aligned}
\end{equation}
where $\simeq$ means that equality holds up to an isometry. This isometry is identified with $W_{\mathbf{ab}}$ (the cohomomorphism of $*$ operator ($\Delta\circ *=(*\otimes *)\circ\Delta$) guarantees that $W_{\mathbf{ab}}$ is an isometry).
We obtain that $\chi_{\mathbf{a,b,c}}$ is proportional to the identity matrix 
\begin{equation}
    \chi_{\mathbf{a,b,c}}=\frac{C_\vbnt C_\mathbf{b}}{C_\mathbf{c}} \bo_{N_{\mathbf{ab}}^\mathbf{c}}.
\end{equation}
In the remaining, we show that choosing a suitable superposition coefficient of $\tilde{M}_\vbnt$ as $d_\vbnt/\mathrm{FPdim}(\mathcal D)$ verifies $m_{\mathbf{c}}=\sum_{\mathbf{ab}} \tr[\chi_{\mathbf{a,b,c}}] m_\vbnt m_\mathbf{b}$ where $m_\vbnt:=\tr[\mu_\vbnt]$. Under this choice,  $m_\vbnt = d_\vbnt/(\mathrm{FPdim}(\mathcal D) C_\vbnt)$. 

The proof follows~\cite{etingof2015tensor}. 
    Define a matrix $\tilde{N}_S=\sum_{\vbnt\in\text{Irr}(\mA)} \tilde{N}_\vbnt$, which has strictly positive entries due to transitivity of $N_{\mathbf{ab}}^\mathbf{c}$. The Perron-Frobenius theorem states that $\tilde{N}_S$ has a unique eigenvalue equal to its spectral radius, and the corresponding eigenvector $\textbf{v}$ can be normalized to have strictly positive entries. Since $[\tilde{N}_S, N_\vbnt]=0$ for any $\vbnt$, $\textbf{v}$ is also an eigenvector of $N_\vbnt$ with eigenvalues $d_\vbnt$. 

    
    Since $N_\vbnt$ as a matrix with non-negative entries has an eigenvector $\textbf{v}$ with strictly positive entries, by the Perron-Frobenius theorem, the corresponding eigenvalue $d_\vbnt$ is the spectral radius. Acting eq.~\eqref{eqn:N-algebra} on $\textbf{v}$ then leads to
    \begin{equation}
        d_\vbnt d_\mathbf{b} = \sum_\mathbf{c} N^\mathbf{c}_{\mathbf{ab}} d_\mathbf{c} ,
    \end{equation}
    which can be rewritten as $\sum_\mathbf{c} d_\mathbf{c} (N_\vbnt)_{\mathbf{cb}}=d_\vbnt d_\mathbf{b}$. 
    Therefore,
    \begin{equation}
    \begin{aligned}
        \sum_{\mathbf{ab}} N^\mathbf{c}_{\mathbf{ab}} d_\mathbf{a} d_\mathbf{b} &=  \sum_{\mathbf{ab}} d_\mathbf{b} (N_\mathbf{a})_{\mathbf{cb}} d_\mathbf{a}  \\
        & = \sum_{\mathbf{a}^* \mathbf{b}} d_\mathbf{b} (N_{\mathbf{a}^*})_{\mathbf{bc}} d_{\mathbf{a}^*} = \mathrm{FPdim}(\mathcal D) d_\mathbf{c}, 
    \end{aligned}
    \end{equation}
    where $\mathbf{a}^*$ denotes the irreducible representation satisfying $N_{\mathbf{a}^*}=N_\mathbf{a}^T$, and $d_\mathbf{a}=d_{\mathbf{a}^*}$ follows from the fact that $N_\mathbf{a}$ and $N_\mathbf{a}^T$ have the same spectral radius. 
    
    Recall that $m_\vbnt = d_\vbnt/(\mathrm{FPdim}(\mathcal D) C_\vbnt)$, then $\{m_\vbnt\}$ fulfill 
    \begin{equation}
    \begin{aligned}
        m_\mathbf{c} &= \sum_{\mathbf{ab}} \left(\frac{C_\mathbf{a} C_\mathbf{b}}{C_\mathbf{c}} N_{\mathbf{ab}}^\mathbf{c} \right)  m_\mathbf{a} m_\mathbf{b}\\
        &=\sum_{\mathbf{ab}} \tr[\chi_{\mathbf{a,b,c}}] m_\vbnt m_\mathbf{b}.
    \end{aligned}
    \end{equation}
    This finishes the proof. 
\end{proof}

Still, one needs to prove that after taking the periodic boundary condition, $\rho^{(N)}(M)$ is a valid density matrix. Note that the tensor $M$ in eq.~\eqref{eqn:fixed-point-generate-main} generates the following density matrix,
\begin{equation}
    \rho^{(N)}(M)=\sum_I b^{\otimes N}\cdot\phi^{\otimes N}\circ \Delta^{N-1}(e_I) \tr[\psi(e^I)],
\end{equation}
where $\phi=\oplus_{\vbnt\in\mathrm{Irr}(\mA)}\phi_\vbnt$ and matrix $b=\oplus_{\vbnt\in\mathrm{Irr}(\mA)}d_\vbnt \bo_{D_\vbnt}/\mathrm{FPdim}(\mathcal D)$ is a diagonal matrix with $D_\vbnt$ being the dimension of irreducible representation $\phi_\vbnt$. 
We identify $x=\sum_I c_I e_I$ where $c_I=\tr[\psi(e^I)]$, from which one can rewrite $\rho^{(N)}(M)= b^{\otimes N} \cdot\phi^{\otimes N}\circ \Delta^{N-1}(x)$. If there exists $y\in\mA$ such that $x=y y^*$, then $\rho^{(N)}(M)$ is a valid density matrix because
\begin{equation}
\begin{aligned}
    &\rho^{(N)}(M)= \\
    &\left[\sqrt{b}^{\otimes N}\phi^{\otimes N}\circ \Delta^{N-1}(y)\right]\left[\sqrt{b}^{\otimes N}\phi^{\otimes N}\circ \Delta^{N-1}(y)\right]^\dg.
\end{aligned}
\end{equation}
Furthermore, such $\rho^{(N)}$ is a locally purified density operator. 

In the example of CZY, the choice of $\psi=\psi_{S_1}\oplus \psi_{S_2}$ leads to
\[x=e_3+e_5+e_8\]
which is indeed positive since $x=yy^*$ where $y=-\frac{1}{2}e_1+\frac{1}{2}e_3 - e_8$. 


\section{Generalizations to finite groups}
\label{sec:finite group generalization}

In this section we show how to generalize the pre-algebra construction presented above to finite groups with a non-trivial 3-cocycle. Similar to eq.~\eqref{eq:CZX to semion}, these pre-bialgebras are related to the boundaries of twisted G-injective PEPS.

We start from the the generalization of the CZY MPO to a finite group $G$ with a non-trivial 3-cocycle $\omega$ \cite{garrerubio2023classifying}: Consider the MPO tensors $A_g$, $g\in G$ defined by the non-zero components
\begin{equation}
    A_g^{gh,h} = \omega_g\ket{h}\bra{h},
\end{equation}
where $\omega_g = \sum_{kl} \omega(g,k,k^{-1}l) \ket{k}\bra{l}$. The PBC MPOs defined by these tensors form a representation of the group $G$. This representation of the group is anomalous, and the anomaly is exactly given by the equivalence class of the three-cocycle $\omega$. Moreover, one can explicitly verify that,  if $\omega_g$ is invertible, $A_g$ is injective after blocking two sites. In the example of $G=\mathbb{Z}_2=\{0,1\}$, $\omega_0$ is not invertible while $\omega_1$ is invertible; therefore $A_0$ is non-injective while $A_1$ is injective, agreeing with the discussion in the main text.  

Just as in the case of the CZY model, we define the corresponding algebra in the injective sectors as the set of MPOs on two sites with arbitrary boundary condition:
\begin{equation}\label{eq:G algebra def wrong}
    \begin{aligned}
    &\mathcal{A}_g = \\&\mathrm{Span}\left\{\sum_{l} \bra{k}\omega_g \ket{l} \bra{l} \omega_g \ket{h} \cdot \ket{gl,gh}\bra{l,h} \middle| h,k\in G \right\}.        
    \end{aligned}
\end{equation}
As $\omega_g$ is invertible, a basis of this set is 
\begin{equation}\label{eq:G algebra def right}
    \mathcal{B}_g = \left\{ b_g^{k,h} =\bra{k} \omega_g \ket{h} \cdot \ket{gk,gh}\bra{k,h} \middle| h,k\in G \right\}.
\end{equation}
One can readily verify that setting $\mathcal{A}_g = \mathrm{Span}\  \mathcal{B}_g$ in the non-injective sectors as well, the set $\mA = \bigoplus_g \mathcal{A}_g$ is closed under multiplication, i.e., it is an algebra. Explicitly, as $\bra{k}\omega_g\ket{l} \neq 0$, the set
\begin{equation}\label{eq:G algebra def right 2}
    \mathcal{B}'_g = \left\{  \ket{gk,gh}\bra{k,h} \middle| h,k\in G \right\}
\end{equation}
also forms a basis of $\mathcal{A}_g$: The basis elements $\bigcup_g \mathcal{B}'_g$ themselves form a closed set under multiplication. 

We now define a coproduct on this algebra corresponding to growing the MPO. Explicitly, in the basis $\bigcup_g\mathcal{B}_g$, the coproduct is defined as
\begin{equation}
    \Delta(b_g^{k,h}) = \sum_{lm} \omega(g,l,l^{-1}m) \cdot b_g^{k,l} \otimes b_g^{m,h}.
\end{equation}
One can readily verify that $\Delta$ is associative, i.e., it is a coproduct, and that it is multiplicative, making $\mathcal{A}$ a pre-bialgebra. It is easy to check that in this pre-bialgebra there is no counit. The comultiplication of the unit, however, is very nice: If $\omega$ is a normalized 3-cocycle, i.e., $\omega(1,k,l) = 1$ for all $k,l\in G$, then $\Delta(1) = 1\otimes 1$. In the example of $G=\mathbb{Z}_2$, the pre-bialgebra structure of $\mA$ is the same as in the main text, up to a basis transformation from $\{e_I\}$ to $\{b_I\}$. 

Let us note that this construction, in the non-injective sectors, removes the non-invertible matrix $\omega_g$ for the last MPO tensor, and thus these MPOs do not fit into the MPO algebraic framework presented in \cite{molnar2022matrix}. One can, however, obtain a different MPO description (that is different from the previous MPOs only in the non-injective sectors) by starting from the pre-bialgebra defined above and applying the procedure outlined in sec.~\ref{sec:mpo-reconstruction}. Carrying out this procedure for the case $G=\mathbb{Z}_2$, we obtain the MPOs of the CZY model described in the main text.

Let us finally compare this pre-bialgebra with the weak Hopf algebra obtained from the boundary of the twisted $G$-injective PEPS. The injective MPO tensors describing the boundary of the twisted $G$-injective PEPS are given by
\begin{equation*}
    \begin{aligned}
    T_g = &\sum_{h_1,h_2}  \omega(g,h_1, h_1^{-1}h_2) \cdot \\
    & \cdot \ket{gh_2, gh_1} \bra{h_2,h_1} \otimes \ket{gh_2, h_2}\bra{gh_1,h_1} ,  
    \end{aligned}
\end{equation*}
where the first component in the tensor product is the physical index of the MPO tensor, while the second component is its virtual index. As each MPO tensor $T_g$ is injective, we do not need further blocking to obtain the MPO algebra. Closing $T_g$ with arbitrary boundary conditions gives rise to the same set $\mA_g$ as in the construction above: the boundary $\omega(g,h,^{-1}k) \ket{gh,h}\bra{gk,k}$ generates the algebra element $\ket{gk,gh}\bra{k,h}$. 

Therefore the boundary of the twisted $G$-injective PEPS gives rise to the same algebraic structure $\mA$ as the generalized CZY model (meaning, the multiplication rules are the same). The MPO tensors $T_g$ and $A_g$, however, are different, and thus the coalgebra structure defined by $T_g$ is different from the coalgebra structure defined by the tensors $A_g$.  Explicitly, the coproduct $\hat{\Delta}$ defined by the tensor $T_g$ is 
\begin{equation*}
    \hat{\Delta}(b_g^{k,h}) = \sum_{l} b_g^{k,l} \otimes b_g^{l,h}.
\end{equation*}
It is straightforward to check that $\hat{\Delta}(1)\neq 1\otimes 1$ and that $(\mathcal{A},\hat{\Delta})$ is a weak Hopf algebra: the counit and the antipode are given by
\begin{align}
      \epsilon(b^g_{k,l}) &= \delta_{k,l}, \\
      S(b^g_{k,l}) &= \frac{\omega(g^{-1},g,l)}{\omega(g^{-1},g,k)\omega(1,l,l^{-1}k)}b^{g^{-1}}_{gl,gk}.    
\end{align}
In the example of $G=\mathbb{Z}_2$, this is $\mA_\bdy$ in sec.~\ref{sec:relation}.

\bibliography{main}
\end{document}